\definecolor{winered}{rgb}{0.5,0,0}
\definecolor{darkblue}{rgb}{0,0,0.5}
\definecolor{darkgreen}{rgb}{0,0.3,0}
\newcommand{\A}{\mathsf{A}}
\newcommand{\B}{\mathsf{B}}
\newcommand{\C}{\mathsf{C}}
\newcommand{\R}{\mathsf{R}}
\newcommand{\negl}{\mathsf{negl}}
\newcommand{\poly}{\mathsf{poly}}
\newcommand{\Adv}{\mathsf{Adv}}
\newcommand{\K}{\mathsf{K}}
\newcommand{\Z}{\mathbb{Z}}
\newcommand{\G}{\mathbb{G}}
\newcommand{\N}{\mathbb{N}}
\newcommand{\accept}{\mathtt{accept}}
\newcommand{\reject}{\mathtt{reject}}
\newcommand{\sfGame}{\mathsf{G}}
\newcommand{\BG}{\mathsf{BG}}
\newcommand{\BGcal}{\mathcal{BG}}
\newcommand{\pp}{\mathsf{pp}}
\newcommand{\sk}{\mathsf{sk}}
\newcommand{\pk}{\mathsf{pk}}
\newcommand{\apk}{\mathsf{apk}}
\newcommand{\EUFCMA}{\mathsf{EUF \mathchar`- CMA}}
\newcommand{\EUFCMACK}{\mathsf{EUF \mathchar`- CMA\mathchar`-CertKey}}
\newcommand{\Sign}{\mathsf{Sign}}
\newcommand{\Bad}{\mathsf{Bad}}
\newcommand{\DS}{\mathsf{DS}}
\newcommand{\nDSSetup}{\mathsf{Setup}}
\newcommand{\nDSKGen}{\mathsf{KGen}}
\newcommand{\nDSSign}{\mathsf{Sign}}
\newcommand{\nDSVerify}{\mathsf{Verify}}
\newcommand{\OMS}{\mathsf{OMS}}
\newcommand{\nOMSSetup}{\mathsf{Setup}}
\newcommand{\nOMSKGen}{\mathsf{KGen}}
\newcommand{\nOMSKVerify}{\mathsf{KVerify}}
\newcommand{\nOMSKAgg}{\mathsf{KAgg}}
\newcommand{\nOMSSign}{\mathsf{Sign}}
\newcommand{\nOMSSVerify}{\mathsf{SVerify}}
\newcommand{\AS}{\mathsf{AS}}
\newcommand{\SAS}{\mathsf{SAS}}
\newcommand{\nSASSetup}{\mathsf{Setup}}
\newcommand{\nSASKGen}{\mathsf{KGen}}
\newcommand{\nSASKVerify}{\mathsf{KVerify}}
\newcommand{\nSASSign}{\mathsf{Sign}}
\newcommand{\nSASSVerify}{\mathsf{SVerify}}
\newcommand{\Cert}{\mathsf{Cert}}
\newcommand{\DDH}{\mathsf{DDH}}
\newcommand{\DBP}{\mathsf{DBP}}
\newcommand{\BGLS}{\mathsf{BGLS03}}
\newcommand{\BGOY}{\mathsf{BGOY07}}
\newcommand{\CK}{\mathsf{CK20}}
\newcommand{\LLYt}{\mathsf{LLY13}}
\newcommand{\LLYfa}{\mathsf{LLY15a}}
\newcommand{\LLYfb}{\mathsf{LLY15b}}
\newcommand{\LOSSW}{\mathsf{LOSSW06}}
\newcommand{\McDa}{\mathsf{McD20a}}
\newcommand{\McDb}{\mathsf{McD20b}}
\newcommand{\Ours}{\mathsf{Ours}}
\newcommand{\OursCK}{\mathsf{Ours}}
\newcommand{\PS}{\mathsf{PS16}}
\newcommand{\Sch}{\mathsf{Sch11}}
\newcommand{\YCMO}{\mathsf{YCMO14}}
\newcommand{\YMOt}{\mathsf{YMO13}}
\spnewtheorem{assumption}{Assumption}{\bfseries}{\itshape}
\begin{document}

\title{Ordered Multi-Signatures with Public-Key Aggregation from SXDH Assumption\thanks{A preliminary version of this paper is appeared in the 20th International Workshop on Security (IWSEC 2025).}}
\author{Masayuki Tezuka\textsuperscript{(\Letter)} \and Keisuke Tanaka}
\authorrunning{M.Tezuka et al.}
\titlerunning{OMS with Public-Key Aggregation from SXDH Assumption}
\institute{Institute of Science Tokyo, Tokyo, Japan\\
\email{tezuka.m.eab3@m.isct.ac.jp}
}

\maketitle
\pagestyle{plain}
\noindent
\makebox[\linewidth]{September 22, 2025}

\begin{abstract}
An ordered multi-signature scheme allows multiple signers to sign a common message in a sequential manner and allows anyone to verify the signing order of signers with a public-key list.
In this work, we propose an ordered multi-signature scheme by modifying the sequential aggregate signature scheme by Chatterjee and Kabaleeshwaran (ACISP 2020).
Our scheme offers compact public parameter size and the public-key aggregation property.
This property allows us to compress a public-key list into a short aggregated key.
We prove the security of our scheme under the symmetric external Diffie-Hellman (SXDH) assumption without the random oracle model.

\keywords{Ordered multi-signature \and Key aggregation \and  Bilinear groups \and SXDH assumption}
\end{abstract}

\section{Introduction}
\subsection{Background}\label{IntroBack}

\paragraph{\bf Aggregate Signatures (AS).}
An aggregate signature scheme introduced by Boneh, Gentry, Lynn, and Shacham \cite{BGLS03} is a special type of signature scheme that allows anyone to compress $n$ signatures produced by different signers on different messages into a short aggregate signature.
This signature scheme is meaningful when the size of an aggregate signature is independent of $n$.
This attractive feature is useful for reducing the storage space for signatures and realizing efficient verification of signatures.

Constructing aggregate signature schemes under the standard model without the random oracle model (ROM) is a difficult task.
In previous works, a multi-linear map based scheme \cite{HSW13} and an indistinguishability obfuscation (iO) based scheme~\cite{HKW15} are proposed. 
However, these schemes rely on strong assumptions.

\paragraph{\bf Sequential Aggregate Signatures (SAS).}
Due to the difficulty of constructing efficient aggregate signature schemes from standard assumptions without the ROM, several variants of aggregate signature schemes with restricted aggregation were proposed.
The one variant of the aggregate signature scheme is the sequential aggregate signature scheme proposed by Lysyanskaya, Micali, Reyzin, and Shacham~\cite{LMRS04}.
In this scheme, signatures are aggregated in a sequential manner: 
Each signer in turn sequentially signs a message and updates the signature.

\paragraph{\bf SAS Based on SXDH Assumption.} 
In previous works, several pairing-based sequential aggregate signature schemes have been proposed. \cite{CK20,LLY13,LLY15,LOSSW06,McD20,PS16,Sch11}.
We summarize these schemes in Fig.~\ref{SAS_List}.

\begin{figure}[htbp]
\begin{center}
\scalebox{1}{
\begingroup
\renewcommand{\arraystretch}{1}
\tabcolsep = 2.0pt
\begin{tabular}{lccccccc}\hline
Scheme
&
Assumption
&
$\mathcal{M}$
&
$|\pp|$
&
$|\pk|$
&
$|\sigma|$
&
Ver $\# \mathbb{P}$ \\ 
\hline
\begin{tabular}{l}
$\AS_{\BGLS}$  \\
\cite{BGLS03} $\S 3.1$ 
\end{tabular}
&
\begin{tabular}{l}
co-CDH \\
\textcolor{red}{+ROM}
\end{tabular}
&
\begin{tabular}{l}
$\{0, 1\}^{*}$
\end{tabular}
& 
\begin{tabular}{l}
$|\BGcal_{2}|+ |\G|$\\
$+ |\widetilde{\G}| + |H|$
\end{tabular}
&
\begin{tabular}{l}
$|\G|$
\end{tabular}
&
\begin{tabular}{l}
$|\widetilde{\G}|$
\end{tabular}
&
\begin{tabular}{l}
\textcolor{red}{$n+1$}
\end{tabular}
\\

\hline
\begin{tabular}{l}
$\SAS_{\LOSSW}$ \\
\cite{LOSSW06} \S 3.2
\end{tabular}
&
\begin{tabular}{l}
CDH
\end{tabular}
&
\begin{tabular}{l}
$\{0, 1\}^{\ell}$
\end{tabular}
&
\begin{tabular}{l}
$|\BGcal_{1}|$
\end{tabular}
&
\begin{tabular}{l}
\textcolor{red}{$(\ell + 1)|\G|$}\\
$+ |\G_{T}|$
\end{tabular}
&
\begin{tabular}{l}
$2|\G|$
\end{tabular}
&
\begin{tabular}{l}
$3$
\end{tabular}
\\

\hline
\begin{tabular}{l}
$\SAS_{\Sch}$ \\
\cite{Sch11} \S 3.4 
\end{tabular}
&
\begin{tabular}{l}
\textcolor{red}{LRSW${}^\dagger$}
\end{tabular}
&
\begin{tabular}{l}
$\Z^{*}_{p}$
\end{tabular}
&
\begin{tabular}{l}
$|\BGcal_{1}| + |\G|$
\end{tabular}
&
\begin{tabular}{l}
$2|\G|$
\end{tabular}
&
\begin{tabular}{l}
$4|\G|$
\end{tabular}
&
\begin{tabular}{l}
\textcolor{red}{$n$}
\end{tabular}
\\

\hline
\begin{tabular}{l}
$\SAS_{\LLYt}$ \\
\cite{LLY13} \S 3.2 
\end{tabular}
&
\begin{tabular}{l}
\textcolor{red}{LRSW${}^\dagger$}
\end{tabular}
&
\begin{tabular}{l}
$\Z^{*}_{p}$
\end{tabular}
&
\begin{tabular}{l}
$|\BGcal_{1}| + 2|\G|$
\end{tabular}
&
\begin{tabular}{l}
$|\G|$
\end{tabular}
&
\begin{tabular}{l}
$3|\G|$
\end{tabular}
&
\begin{tabular}{l}
$5$
\end{tabular}
\\

\hline
\begin{tabular}{l}
$\SAS_{\LLYfa}$ \\
\cite{LLY15} \S 4.2.1
\end{tabular}
&
\begin{tabular}{l}
SXDH\textcolor{red}{+LW2${}^{\ddag}$} \\
+DBDH
\end{tabular}
&
\begin{tabular}{l}
$\Z^{*}_{p}$ 
\end{tabular}
&
\begin{tabular}{l}
$|\BGcal_{3}|$\\
$+ 5|\G| + 7|\widetilde{\G}|$
\end{tabular} 
&
\begin{tabular}{l}
$2|\G| +8|\widetilde{\G}|$\\
$+ |\G_{T}|$
\end{tabular}
&
\begin{tabular}{l}
$8|\G|$
\end{tabular}
&
\begin{tabular}{l}
$8$
\end{tabular}
\\

\hline
\begin{tabular}{l}
$\SAS_{\LLYfb}$ \\
\cite{LLY15} \S 4.2.2 
\end{tabular}
&
\begin{tabular}{l}
\textcolor{red}{LW1${}^{\ddag}$+LW2${}^{\ddag}$}\\
+DBDH 
\end{tabular}
&
\begin{tabular}{l}
$\Z^{*}_{p}$ 
\end{tabular}
&
\begin{tabular}{l}
$|\BGcal_{3}| + 6|\G|$\\
$ + 3|\widetilde{\G}| +  |\G_{T}|$
\end{tabular}
&
\begin{tabular}{l}
$6|\G| +6|\widetilde{\G}|$\\
$ + |\G_{T}|$
\end{tabular}
&
\begin{tabular}{l}
$6|\G|$
\end{tabular}
&
\begin{tabular}{l}
$6$
\end{tabular}
\\

\hline
\begin{tabular}{l}
$\SAS_{\PS}$\\
\cite{PS16} \S 5 
\end{tabular}
&
\begin{tabular}{l} 
\textcolor{red}{PS${}^\dagger$}
\end{tabular}
&
\begin{tabular}{l}
$\Z^{*}_{p}$
\end{tabular}
&
\begin{tabular}{l}
$|\BGcal_{3}|$\\
$ +  2|\G| + 2|\widetilde{\G}|$
\end{tabular}
&
\begin{tabular}{l}
$|\widetilde{\G}|$
\end{tabular}
&
\begin{tabular}{l}
$2|\G|$
\end{tabular}
&
\begin{tabular}{l}
$2$
\end{tabular}
\\

\hline
\begin{tabular}{l}
$\SAS_{\McDa}$ \\
\cite{McD20} \S 5.3
\end{tabular}
&
\begin{tabular}{l}
\textcolor{red}{$\ell$-PS${}^\dagger$}
\end{tabular}
&
\begin{tabular}{l}
$(\Z^{*}_{p})^{\ell}$ 
\end{tabular}
&
\begin{tabular}{l}
$|\BGcal_{3}|$\\
$+ 2|\G| + 2|\widetilde{\G}|$ 
\end{tabular}
&
\begin{tabular}{l}
$\ell|\widetilde{\G}|$
\end{tabular}
&
\begin{tabular}{l}
$2|\G|$
\end{tabular}
&
\begin{tabular}{l}
$2$
\end{tabular}
\\

\hline
\begin{tabular}{l}
$\SAS_{\McDb}$ \\
\cite{McD20} \S 5.4 
\end{tabular}
&
\begin{tabular}{l}
\textcolor{red}{$\ell$-PolyS${}^\dagger$}
\end{tabular}
&
\begin{tabular}{l}
$(\Z^{*}_{p})^{\ell}$ 
\end{tabular}
&
\begin{tabular}{l}
$|\BGcal_{3}|$\\
$+ 2|\G| + 2|\widetilde{\G}|$ 
\end{tabular}
&
\begin{tabular}{l}
$\ell|\widetilde{\G}|$
\end{tabular}
&
\begin{tabular}{l}
$2|\G|$
\end{tabular}
&
\begin{tabular}{l}
$2$
\end{tabular}
\\

\hline
\begin{tabular}{l}
$\SAS_{\CK}$ \\
\cite{CK20} \S 4.1
\end{tabular}
&
\begin{tabular}{l}
SXDH
\end{tabular}
&
\begin{tabular}{l}
$\Z^{*}_{p}$
\end{tabular}
&
\begin{tabular}{l}
$|\BGcal_{3}|$\\
$+ 4|\G| + 3|\widetilde{\G}|$
\end{tabular}
&
\begin{tabular}{l}
$|\widetilde{\G}|$
\end{tabular}
&
\begin{tabular}{l}
$3|\G|$
\end{tabular}
&
\begin{tabular}{l}
$3$
\end{tabular}
\\
\hline
\end{tabular}
\endgroup
}
\end{center}
\caption{\small Comparison among pairing-based sequential aggregate signature schemes.
We highlight the weakness of the corresponding scheme compared to $\SAS_{\CK}$ in \textcolor{red}{red}.
In the column ``Assumption'', $\dagger$ represents interactive assumptions and $\ddag$ represents non-standard static assumptions.
The column ``$\mathcal{M}$'' represents a message space.
The column ``$|\pp|$'' (resp. ``$|\pk|$'', ``$|\sigma|$'')  represents the size of public parameters, (resp. public key, signature).
In the column ``$|\pp|$'',  $|\BGcal_{i}|$ (resp. $|H|$) is the size of the description of a type $i$ pairing group (resp. hash function $H$).
The column of ``Ver $\# \mathbb{P}$'' represents the number of pairing operations in the signature verification and $n$ is the number of signatures compressed into an aggregate signature.}
\label{SAS_List}
\end{figure}

$\SAS_{\CK}$ has following strengths compared with schemes in Fig.~\ref{SAS_List}:
(1)Schemes $\AS_{\BGLS}$, $\SAS_{\Sch}$ \cite{Sch11}, $\SAS_{\LLYt}$ \cite{LLY13}, $\SAS_{\LLYfa}$ \cite{LLY15}, $\SAS_{\LLYfb}$ \cite{LLY15}, $\SAS_{\PS}$ \cite{PS16}, $\SAS_{\McDa}$ \cite{McD20}, and $\SAS_{\McDb}$ rely on interactive assumptions (the LRSW assumption \cite{LRSW99}, the PS~\cite{PS16} assumption, the $\ell$-PS assumption \cite{McD20}, and the $\ell$-polyS assumption \cite{McD20}), non-standard static assumptions (the LW1 assumption \cite{LW10} and the LW2 assumption \cite{LW10}), or the ROM. The security of $\SAS_{\CK}$ is proven under the symmetric external Diffie-Hellman (SXDH) assumption \cite{BGMM05} without the ROM.
(2)In $\SAS_{\LOSSW}$ \cite{LOSSW06}, the number of group elements in a public key $\pk$ depends on the message length.
In $\SAS_{\CK}$, the number of group elements in $\pk$ is independent of the message length.
(3)In the signature verification of  $\AS_{\BGLS}$ and $\SAS_{\Sch}$, the number of pairing operations depends on the number of signatures compressed into an aggregate signature.
In the signature verification of $\SAS_{\CK}$, the number of pairing operations is the constant $3$.

\paragraph{\bf Multi-Signatures (MS) with Public-Key Aggregation.}
A multi-signature scheme introduced by Itakura and Nakamura \cite{IN83} is an interactive protocol that enables $n$ signers collaboratively to sign a common message.
We verify a multi-signature with the set of public keys which correspond to the signers.
The size of the multi-signature should be independent of $n$.

Maxwell, Poelstra, Seurin, and Wuille \cite{MPSW19} proposed a multi-signature scheme with public-key aggregation.
This property allows us to compress a set of public keys into a compact aggregated public key.
This feature is useful for shrinking the transaction data associated with Bitcoin Multisig addresses.
Several multi-signature schemes with constructions with public-key aggregation (e.g., \cite{BDN18,FH21,MPSW19}) were proposed in the ROM.

\paragraph{\bf Ordered Multi-Signatures (OMS).}
A drawback of a multi-signature scheme is that it requires interactions among the signers who collaboratively sign a common message to generate a signature.
Boldyreva, Gentry, O'Neill, and Yum \cite{BGOY07} proposed an ordered multi-signature scheme that allows signers to sign a common message in a sequential manner.
A sequentially generated signature has a property that the signing order of signers can be verified.

This feature is useful for a network routing application.
In the network routing application, we want to verify the path (i.e., the ordered list of routers)  of the packet that travels to reach its destination.
By using an ordered multi-signature scheme to have routers sequentially sign a packet passed through it, we can verify the traveled path of the packet.
Ordered multi-signatures are also useful for proofs of sequential communication delay \cite{BDPT24}.

\subsection{Motivation and Technical Problem}\label{Subsec_Moti_Pro}
\paragraph{\bf Motivation: OMS with Public-Key Aggregation.}
To make ordered multi-signatures practical, it is desired to construct an ordered multi-signature scheme with public-key aggregation under the standard assumption.
Moreover, a scheme with a smaller public parameter size, public key size, and efficient signature verification is desirable.

To mitigate the public key size problem, the public-key aggregation property is useful.
Let us consider the following application scenario of ordered multi-signatures for network routing.
Suppose that network routers sequentially sign a packet through them and confirm that a packet passed through a particular path by verifying an ordered multi-signature.
Consider the situation that a large number of packets pass through a particular path and we want to focus on verifying packets for the specific path.
In this case, if the values of some calculations for signature verifications are common, these values can be reused in signature verifications and the total time of many signatures in verification can be reduced.
The public-key aggregation property allows to reuse of aggregated public keys to verify signatures for the common network pass. 
However, as far as we know, this property has not been considered in previous works of ordered multi-signatures.

\paragraph{\bf Technical Problem: OMS from $\SAS_{\CK}$ Does Not Support Public-Key Aggregation Property.}
Boldyreva et al. \cite{BGOY07} explained how to transform an ordered multi-signature scheme from a sequential aggregate signature scheme.
Thanks to their transformation, we obtain the efficient ordered multi-signature scheme from the efficient sequential aggregate signature $\SAS_{\CK}$.
However, this scheme does not have the public-key aggregation property. 
Here, we briefly explain why the derived scheme does not support the public-key aggregation.

Let $(p, \G, \widetilde{\G}, \G_{T}, e)$ be a bilinear group and $\widetilde{U}, \widetilde{D}, \widetilde{H}$ be a group elements which are included in public parameter.
The public key of the derived scheme is $\pk_{i} = \widetilde{V}_{i}$ where $\widetilde{V}_{i}$ is a group element of $\widetilde{\G}$.
The signature on a message $m$ signed by signers 1 to $n$ consists of a tuple $\sigma = (A, B, C)$ where $A, B, C$ are group elements of $\widetilde{\G}$.
In the verification of ordered multi-signature $\sigma = (A, B, C)$ on a public key list $L=(\pk_{1}, .., \pk_{n})$ and a message $m$, we checks the pairing equation $e(A, \widetilde{U}\prod_{i \in [n]} \widetilde{V}^{m||i}_{i}) \cdot e(B, \widetilde{D}) = e(C, \widetilde{H})$.
If we consider the simple public-key aggregation key $\apk = \prod_{i \in [n]}\pk_{i}$, this does not allow for signature verification.
Since the elements $\widetilde{U}\prod_{i \in [n]} \widetilde{V}^{m||i}_{i}$ cannot be computed from $\apk$.
We see that the obtained ordered multi-signature scheme from the original $\SAS_{\CK}$ unlikely has the public-key aggregation.
We need some modifications to construct an ordered multi-signature signature scheme with public-key aggregation from the standard assumption without the ROM.

\subsection{Our Contributions}\label{Subsec_Our_Contri}
\paragraph{\bf  Contributions.}
First, we modify $\SAS_{\CK}$ and propose our sequential aggregate signature scheme $\SAS_{\OursCK}$ to support a vector message signing of $(\Z^{*}_{p})^{\ell}$ in Section \ref{Subsec_SAS_Ours_CK}.
In the case of $\ell = 1$, $\SAS_{\OursCK}$ corresponds to $\SAS_{\CK}$.
Next, we apply the transformation by Boldyreva et al. \cite{BGOY07} to $\SAS_{\OursCK}$ with $\ell = 2$.
The algebraic structure of the obtained ordered multi-signature scheme is nicely suited to the public-key aggregation property.
We explain detail of this fact in Section \ref{Subsec_OMS_Over_Ours_View}.
Then, we modify the derived ordered multi-signature scheme to obtain the ordered multi-signature scheme $\OMS_{\OursCK}$ in Section \ref{Subsec_OMS_Ours_CK}. 

\paragraph{\bf Comparison with OMS in Previous Works.}
We summarize pairing-based ordered multi-signature schemes in previous works \cite{BGOY07,YCMO14,YMO13} and derived ordered multi-signature schemes by applying the transformation of Boldyreva et al. \cite{BGOY07} to efficient sequential aggregate signature schemes \cite{CK20,LLY13,McD20,PS16} in Fig.~\ref{OMS_List}.

\begin{figure}[h]
\begin{center}
\scalebox{1}{
\begingroup
\renewcommand{\arraystretch}{1}
\tabcolsep = 2.0pt
\begin{tabular}{lccccccccc}\hline
Scheme
&
Assumption
&
$\mathcal{M}$
&
$|\pp|$
&
$|\pk|$
&
$|\sigma|$
&
Ver $\# \mathbb{P}$
&
$|\apk|$ \\
\hline
\begin{tabular}{l}
$\OMS_{\BGOY}$\\
\cite{BGOY07} \S 3.2 
\end{tabular}
&
\begin{tabular}{l}
CDH\\
\textcolor{red}{+ROM}
\end{tabular}
&
\begin{tabular}{l}
$\{0, 1\}^{*}$
\end{tabular}
&
\begin{tabular}{l}
$|\BGcal_{1}| + |\G|$\\
$ + |\widetilde{\G}| + |H|$
\end{tabular}
&
\begin{tabular}{l}
$3|\G|$
\end{tabular}
&
\begin{tabular}{l}
$2|\G|$
\end{tabular}
&
\begin{tabular}{l}
$3$
\end{tabular}
&
\begin{tabular}{l}
$2|\G|^{\flat}$
\end{tabular}
\\

\hline
\begin{tabular}{l}
$\OMS_{\YMOt}$\\
\cite{YMO13} \S 4.2
\end{tabular}
&
\begin{tabular}{l}
CDH
\end{tabular}
&
\begin{tabular}{l}
$\{0, 1\}^{\ell}$
\end{tabular}
&
\begin{tabular}{l}
$|\BGcal_{1}|$\\
\textcolor{red}{$+ (\ell+3)|\G|$}
\end{tabular}
&
\begin{tabular}{l}
$3|\G|$
\end{tabular}
&
\begin{tabular}{l}
$3|\G|$
\end{tabular}
&
\begin{tabular}{l}
$4$
\end{tabular}
&
\begin{tabular}{l}
$2|\G|^{\flat}$
\end{tabular}
\\

\hline
\begin{tabular}{l}
$\OMS_{\YCMO}$\\
\cite{YCMO14} \S 4.2
\end{tabular}
&
\begin{tabular}{l}
CDH
\end{tabular}
&
\begin{tabular}{l}
$\{0, 1\}^{\ell}$
\end{tabular}
&
\begin{tabular}{l}
$|\BGcal_{1}|$\\
\textcolor{red}{$+ (\ell+3)|\G|$}
\end{tabular}
&
\begin{tabular}{l}
$3|\G|$
\end{tabular}
&
\begin{tabular}{l}
$2|\G|$
\end{tabular}
&
\begin{tabular}{l}
$4$
\end{tabular}
&
\begin{tabular}{l}
$2|\G|^{\flat}$
\end{tabular}
\\

\hline

\begin{tabular}{l}
$\SAS_{\LLYt}{}^\sharp$ \\
\cite{LLY13} \S 3.2 
\end{tabular}
&
\begin{tabular}{l}
\textcolor{red}{LRSW${}^\dagger$}
\end{tabular}
&
\begin{tabular}{l}
Prefix \\
of $\Z^{*}_{p}{}^{\sharp}$
\end{tabular}
&
\begin{tabular}{l}
$|\BGcal_{1}| + 2|\G|$
\end{tabular}
&
\begin{tabular}{l}
$|\G|$
\end{tabular}
&
\begin{tabular}{l}
$3|\G|$
\end{tabular}
&
\begin{tabular}{l}
$5$
\end{tabular}
&
\begin{tabular}{l}
\textcolor{red}{$\times$}
\end{tabular}
\\
\hline

\begin{tabular}{l}
$\SAS_{\PS}{}^\sharp$\\
\cite{PS16} \S 5 
\end{tabular}
&
\begin{tabular}{l} 
\textcolor{red}{PS${}^\dagger$}
\end{tabular}
&
\begin{tabular}{l}
Prefix\\ 
of $\Z^{*}_{p}{}^{\sharp}$
\end{tabular}
&
\begin{tabular}{l}
$|\BGcal_{3}|$\\
$ +  2|\G| + 2|\widetilde{\G}|$
\end{tabular}
&
\begin{tabular}{l}
$|\widetilde{\G}|$
\end{tabular}
&
\begin{tabular}{l}
$2|\G|$
\end{tabular}
&
\begin{tabular}{l}
$2$
\end{tabular}
&
\begin{tabular}{l}
\textcolor{red}{$\times$}
\end{tabular}
\\
\hline

\begin{tabular}{l}
$\SAS_{\McDa}{}^\sharp$ \\
\cite{McD20} \S 5.3
\end{tabular}
&
\begin{tabular}{l}
\textcolor{red}{$2$-PS${}^\dagger$}
\end{tabular}
&
\begin{tabular}{l}
$\Z^{*}_{p}{}^{\sharp}$ 
\end{tabular}
&
\begin{tabular}{l}
$|\BGcal_{3}|$\\
$+ 2|\G| + 2|\widetilde{\G}|$ 
\end{tabular}
&
\begin{tabular}{l}
$2|\widetilde{\G}|$
\end{tabular}
&
\begin{tabular}{l}
$2|\G|$
\end{tabular}
&
\begin{tabular}{l}
$2$
\end{tabular}
&
\begin{tabular}{l}
$2|\widetilde{\G}|^{\flat}$
\end{tabular}
\\
\hline

\begin{tabular}{l}
$\SAS_{\CK}{{}^\sharp}$ \\
\cite{CK20} \S 4.1
\end{tabular}
&
\begin{tabular}{l}
SXDH
\end{tabular}
&
\begin{tabular}{l}
Prefix \\
of $\Z^{*}_{p}{}^{\sharp}$
\end{tabular}
&
\begin{tabular}{l}
$|\BGcal_{3}|$\\
$+ 4|\G| + 3|\widetilde{\G}|$
\end{tabular}
&
\begin{tabular}{l}
$|\widetilde{\G}|$
\end{tabular}
&
\begin{tabular}{l}
$3|\G|$
\end{tabular}
&
\begin{tabular}{l}
$3$
\end{tabular}
&
\begin{tabular}{l}
\textcolor{red}{$\times$}
\end{tabular}
\\

\hline

\begin{tabular}{l}
$\OMS_{\OursCK}$\\
\S \ref{Subsec_OMS_Ours_CK}
\end{tabular}
&
\begin{tabular}{l}
SXDH
\end{tabular}
&
\begin{tabular}{l}
$\Z^{*}_{p}$
\end{tabular}
&
\begin{tabular}{l}
$|\BGcal_{3}|$\\
$+ 4|\G| + 3|\widetilde{\G}|$
\end{tabular}
&
\begin{tabular}{l}
$2|\widetilde{\G}|$
\end{tabular}
&
\begin{tabular}{l}
$3|\G|$
\end{tabular}
&
\begin{tabular}{l}
$3$
\end{tabular}
&
\begin{tabular}{l}
$2|\widetilde{\G}|$
\end{tabular}

\\

\hline
\end{tabular}
\endgroup
}
\end{center}
\caption{\small Comparison among pairing-based ordered multi-signature schemes.
We highlight the weakness of the corresponding scheme compared to our scheme $\OMS_{\Ours}$ in \textcolor{red}{red}.
In the column ``Assumption'', $\dagger$ represents interactive assumptions.
The column ``$|\pp|$'' (resp. ``$|\pk|$'', ``$|\sigma|$'', ``$|\apk|$'') represents the size of public parameters, (resp. public key, signature, aggregated public key).
In the column ``$|\pp|$'', $|\BGcal_{i}|$ (resp. $|H|$) is the size of the description of a type $i$ pairing group (resp. hash function $H$).
The column of ``Ver $\# \mathbb{P}$'' represents the number of pairing operations in the signature verification.
In the column ``$|\apk|$'', $\times$ represents that the scheme does not support public-key aggregation.
$\sharp:$We consider ordered multi-signature schemes by applying the transformation by Boldyreva et al. \cite{BGOY07} to $\SAS_{\LLYt}$,$\SAS_{\PS}$, $\SAS_{\McDa}$, and $\SAS_{\CK}$.
Due to this transformation, the message spaces of  $\SAS_{\LLYt}$,$\SAS_{\PS}$, and $\SAS_{\CK}$ are restricted to prefix of $\Z^{*}_{p}$.
For $\SAS_{\McDa}$, we consider the $\SAS_{\McDa}$ with $\ell=2$ in Fig. \ref{SAS_List}.
This derived ordered multi-signature scheme has message space $\Z^{*}_{p}$.
$\flat:$ The originally proposed paper of the corresponding scheme does not consider the public-key aggregation. With a slight modification, the scheme supports the public-key aggregation property.}
\label{OMS_List}
\end{figure}

$\OMS_{\OursCK}$ has following strengths:
(1)The security of $\OMS_{\OursCK}$ is proven under the SXDH assumption without the ROM.
$\OMS_{\BGOY}$ \cite{BGOY07}, $\SAS_{\LLYt}$ \cite{LLY13},  $\SAS_{\PS}$ \cite{PS16}, and $\SAS_{\McDa}$ \cite{McD20} rely on interactive assumptions or the ROM.
(2)The number of group elements of public parameters of $\OMS_{\OursCK}$ is independent of the message length.
By contrast, $\OMS_{\YMOt}$ \cite{YMO13} and $\OMS_{\YCMO}$ \cite{YCMO14} depends on the message length $\ell$.
(3)$\OMS_{\OursCK}$ supports the public-key aggregation property. 
But ordered multi-signature schemes from $\SAS_{\LLYt}$ \cite{LLY13},  $\SAS_{\PS}$ \cite{PS16}, and $\SAS_{\CK}$
\cite{CK20} do not have this property.
Recently, the Schnorr-based ordered multi-signature scheme was proposed in \cite{BDPT25}. This scheme can also be easily adapted to obtain public key aggregation by using the product among all public keys as aggregate public key.

\section{Preliminaries}\label{Sec_Prelimi}
In this section, we introduce notations and review the signature scheme $\DS_{\CK}$ by Chatterjee and Kabaleeshwaran  \cite{CK20}.

\subsection{Notations}
Let $1^{\lambda}$ be the security parameter. 
A function $f(\lambda)$ is negligible in $\lambda$ if $f(\lambda)$ tends to $0$ faster than $\frac{1}{\lambda^c}$ for every constant $c > 0$.
For an algorithm $\A$, $y \leftarrow \A(x)$ denotes that the algorithm $\A$ outputs $y$ on input~$x$.
We abbreviate probabilistic polynomial time as PPT.

For a positive integer $n$, we define $[n]: =\{1,\dots, n\}$.
For a finite set $S$, $s \xleftarrow{\$} S$ represents that an element $s$ is chosen from $S$ uniformly at random.
We denote a set of infinite bit strings as $\{0, 1\}^{*}$.
For a list $L$, $|L|$ represents the number of elements in $L$.
For a group $\G$, we define $\G^* := \G \backslash \{1_{\G}\}$ where $1_{\G}$ is the identity element of $\G$.
In this work, we consider type-3 pairing groups.
$\BGcal_{3}= (p, \G, \widetilde{\G}, \G_T, e)$ represents a description of a type-3 pairing group. 
We represent a bilinear group generator as $\BG$.
We supply the definition of a bilinear group, a bilinear group generator, and hardness assumptions in Appendix \ref{Appen_BG}.

\subsection{Scheme $\DS_{\CK}$ \cite{CK20}}\label{Subsec_DSCK}
We give the definition of a digital signature scheme in Appendix \ref{Appen_DS_Def}.
Here, we review the signature scheme $\DS_{\CK}$ by Chatterjee and Kabaleeshwaran  \cite{CK20}.
$\DS_{\CK}$ is a randomizable signature scheme.
This scheme has a feature that a signature $\sigma$ on a message $m$ is refreshed to $\sigma'$ without the signing key.
We cannot distinguish whether $\sigma'$ is directly output by the signing algorithm or refreshed.
The message space $\mathcal{M}$ of $\DS_{\CK}$ is $(\Z_{p})^{\ell}$.
The construction of $\DS_{\CK}$ is given in Fig.\ref{DSCKconst}.

\begin{figure}[htbp]
\centering
\begin{tabular}{|l|}
\hline
$\nDSSetup(1^{\lambda}):$\\
~~~$\BGcal_{3}= (p, \G, \widetilde{\G}, \G_T, e) \leftarrow \BG(1^\lambda)$, $G \xleftarrow{\$} \G^*$, $\widetilde{G} \xleftarrow{\$} \widetilde{\G}^*$\\
~~~Return $\pp \leftarrow (\BGcal_{3}, G, \widetilde{G})$.\\

$\nDSKGen (\pp):$\\
~~~$d, x_{1},  x_{2},  \xleftarrow{\$} \mathbb{Z}^{*}_{p}$,  $(y_{j,1},  y_{j,2})_{j \in [\ell]}  \xleftarrow{\$} (\mathbb{Z}^{*}_{p})^{2\ell}$, $\widetilde{H} \xleftarrow{\$} \widetilde{\G}$, $\widetilde{D} \leftarrow \widetilde{H}^{d}$,\\
~~~$\widetilde{U} \leftarrow \widetilde{H}^{x_{2} - dx_{1}}$, $(\widetilde{V}_{j} \leftarrow \widetilde{H}^{y_{j, 2} - dy_{j, 1}})_{j \in [\ell]}$.\\
~~~Return $(\pk, \sk) \leftarrow ((\widetilde{H}, \widetilde{D}, \widetilde{U}, (\widetilde{V}_{j})_{j \in [\ell]}), (x_{1},  x_{2}, (y_{j,1},  y_{j, 2})_{j \in [\ell]}))$.\\

$\nDSSign(\sk=(x_{1},  x_{2}, (y_{j,1},  y_{j, 2})_{j \in [\ell]}), m=(m_{j})_{j \in [\ell]}):$\\
~~~$r \xleftarrow{\$} \mathbb{Z}^{*}_{p}$, $A \leftarrow G^{r}$, $B \leftarrow A^{x_{1} + \sum_{j \in [\ell]}m_{j}y_{j, 1}}$, $C \leftarrow A^{x_{2} + \sum_{j \in [\ell]}m_{j}y_{j, 2}}$,\\
~~~Return $\sigma = (A, B, C)$.\\

$\nDSVerify (\pk=(\widetilde{H}, \widetilde{D}, \widetilde{U}, (\widetilde{V}_{j})_{j \in [\ell]}), m=(m_{j})_{j \in [\ell]}, \sigma = (A, B, C)):$\\
~~~If $A = 1_{\G}$, return $0$.\\
~~~If $e(A, \widetilde{U} \prod_{j \in [\ell]}\widetilde{V}_{j}^{m_{j}} ) \cdot e(B, \widetilde{D}) = e(C, \widetilde{H})$, return $1$.\\
~~~Otherwise return $0$.\\
\hline
\end{tabular}
\caption{\small
The signature scheme $\DS_{\CK}$ \cite{CK20}.}
\label{DSCKconst}
\end{figure}

\begin{lemma}[\cite{CK20}]\label{Th_DSCK_EUF_from_SXDH}
If the SXDH assumption holds (See Appendix \ref{Appen_BG} for the SXDH assumption), then $\DS_{\CK}$ satisfies the EUF-CMA security.
\end{lemma}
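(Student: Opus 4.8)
The plan is to prove \EUFCMA security through a sequence of games whose hops are justified by the two halves of the \SXDH assumption, namely \DDH in $\G$ and \DDH in $\widetilde{\G}$. First I would fix an algebraic normal form. Writing $\widetilde{H}=\widetilde{G}^{h}$ and reading the verification key in the exponent (base $\widetilde{G}$), the key only pins down $d$ together with the combinations $x_{2}-dx_{1}$ and $y_{j,2}-dy_{j,1}$; the scalars $(x_{1},(y_{j,1})_{j})$ remain information-theoretically free. Setting $c_{m}:=(x_{2}-dx_{1})+\sum_{j\in[\ell]}m_{j}(y_{j,2}-dy_{j,1})$ and expanding the pairing equation against the common base $e(G,\widetilde{G})$, verification of $(A,B,C)$ on $m$ is equivalent to the single relation $C=A^{c_{m}}B^{d}$ in $\G$ (using $h\neq 0$ and $d\in\Z^{*}_{p}$). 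Thus the valid signatures on a fixed $m$ form a two-dimensional family, the honest signer outputs the one-dimensional subfamily $B=A^{x_{1}+\sum_{j}m_{j}y_{j,1}}$, $C=A^{x_{2}+\sum_{j}m_{j}y_{j,2}}$ up to the rerandomizing scalar $r$, and a forgery on a fresh $m^{*}$ amounts to producing any $(A^{*},B^{*},C^{*})$ with $A^{*}\neq 1_{\G}$ tied together by the scalars $d$ and $c_{m^{*}}$, which the adversary only ever sees encoded in $\widetilde{\G}$.

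The core of the argument views $(A,B)=(G^{r},G^{r(x_{1}+\sum_{j}m_{j}y_{j,1})})$ as a Naor--Reingold-style affine MAC tag, with the $C$ component and the $\widetilde{\G}$ elements providing the public-verifiability upgrade. I would proceed in two computational phases. In the $\widetilde{\G}$ phase I use \DDH in $\widetilde{\G}$ to switch the verification key into a simulation form, so that the reduction can answer signing queries and set up the forgery-testing relation without holding the free part of \sk\ in the clear. In the $\G$ phase I use \DDH in $\G$, together with the random self-reducibility of \DDH, to inject fresh independent randomness into the $A$-components returned across the $q$ (adaptive, possibly repeated) signing queries, hybridizing query by query so that the hidden exponents $r_{k}(x_{1}+\sum_{j}m^{(k)}_{j}y_{j,1})$ carried by the tags become pseudorandom. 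After both phases I reach a final game in which, conditioned on the adversary's entire view, the quantity governing a correct forgery---the value $B^{*}$ consistent with the chosen $A^{*}$ on the fresh message $m^{*}$---is uniform in the relevant quotient, so the verification relation holds with probability at most $1/p$; adding the $A^{*}=1_{\G}$ case (rejected outright) and the $O(q/p)$ collision and rerandomization losses bounds the forgery probability and yields the claim.

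I expect the main obstacle to be twofold. First, a naive information-theoretic ``fresh message lies outside the span'' argument is unavailable here: since every encoded message vector $(1,m_{1},\dots,m_{\ell})$ shares the constant first coordinate, the queried vectors can span all of $\Z^{\ell+1}_{p}$, so $m^{*}\neq m^{(k)}$ does not make the forgery target linearly independent of the queries. Unforgeability must therefore be extracted computationally, by first randomizing the tags via \DDH and only then applying an information-theoretic step to the fully randomized final game; getting this ordering and the rank/entropy accounting right over polynomially many adaptive queries is the delicate part. Second, the type-3 asymmetry is a genuine constraint: a \DDH challenge in $\G$ cannot be transported to $\widetilde{\G}$ (there is no efficient isomorphism), so one cannot embed a single secret exponent simultaneously as the \DDH secret used to build the $\widetilde{\G}$ verification key and to answer $\G$-signatures. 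The reduction must keep the two groups' challenges separate---\DDH in $\widetilde{\G}$ controlling the indistinguishability of the key/simulation form, \DDH in $\G$ controlling the pseudorandomness of the tags---and argue the information-theoretic core only after the two have been decoupled. Managing this split cleanly, rather than the individual game hops, is where I would expect the real work to lie.
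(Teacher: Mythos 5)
First, a point of reference: the paper you were checked against never proves this lemma---it is imported as a black box from \cite{CK20} (hence the bracketed citation in the statement), and the cited proof goes through the dual-form signature technique. So your proposal has to be measured against what any correct proof must contain, not against a proof printed in this document. Your algebraic setup is right: verification of $(A,B,C)$ on $m$ is equivalent to $A\neq 1_{\G}$ and $C=A^{c_m}B^{d}$, and your observation that the encoded messages $(1,m_1,\dots,m_\ell)$ all share the constant first coordinate---so no span/linear-independence argument is available and the tags must first be made pseudorandom under DDH in $\G$---is exactly the right diagnosis of why this scheme is harder to analyze than PS-style schemes under interactive assumptions.

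The genuine gap is in your endgame. You claim that after randomization ``the value $B^{*}$ consistent with the chosen $A^{*}$'' is uniform, so the forgery verifies with probability at most $1/p$. But verification does not single out one $B^{*}$ per $A^{*}$: by your own normal form, for \emph{every} pair $(A^{*},B^{*})$ with $A^{*}\neq 1_{\G}$ the triple $\bigl(A^{*},B^{*},(A^{*})^{c_{m^{*}}}(B^{*})^{d}\bigr)$ is a valid signature, and the exponents $c_{m^{*}}$ and $d$ defining it are fixed by the public key---they carry no residual entropy in any game. Hence an entropy argument can only kill forgeries of the \emph{honest form} $B^{*}=(A^{*})^{x_{1}+\sum_{j}m^{*}_{j}y_{j,1}}$ (that exponent is indeed uniform given the randomized view). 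The complementary class---triples passing the pairing check but violating the honest-form relation---can never be excluded information-theoretically; it must be handled computationally: such a forgery yields the nontrivial pair $(E,F)=\bigl((A^{*})^{x_{2}+\sum_{j}m^{*}_{j}y_{j,2}}(C^{*})^{-1},\,B^{*}(A^{*})^{-(x_{1}+\sum_{j}m^{*}_{j}y_{j,1})}\bigr)$ satisfying $e(E,\widetilde{H})\cdot e(F,\widetilde{D})=1_{\G_T}$, i.e.\ a DBP solution, and DBP follows from DDH in $\widetilde{\G}$ (Lemma~\ref{Lem_from_SXDH_to_DBP}). This forgery-class split is the heart of the proof in \cite{CK20}, and it is exactly the mechanism this paper itself deploys in Lemma~\ref{Lem_Bad_Does_not_occurs} for adversarially supplied triples; your ``DDH in $\widetilde{\G}$ phase'' is instead described as a key-simulation switch and never extracts anything from a dishonest-form forgery, so the decisive ingredient is missing. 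Note also that the ordering problem you flag is sharper than you suggest: the DDH-in-$\G$ reduction that randomizes the tags hides precisely the exponents needed to decide which class the forgery falls in, while a DBP reduction cannot itself produce randomized-but-valid signatures (that requires $d$ or $G^{d}$, which the DBP challenge withholds). Reconciling these two reductions---not the individual hops---is what the dual-form framework is for, and a sketch that leaves it unresolved has not yet engaged the hard part of the lemma.
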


\section{Sequential Aggregate Signatures (SAS)}\label{Sec_SAS}
In this section, first, we review a definition of a sequential aggregate signature scheme and its security notion.
Then, we present our sequential aggregate signature scheme $\SAS_{\OursCK}$ which is an extension scheme of $\SAS_{\CK}$.
Finally, we prove the security of $\SAS_{\OursCK}$.

\subsection{Sequential Aggregate Signature Scheme $\SAS$}
We review a definition of a sequential aggregate signature scheme and its security notion.
\begin{definition}[Sequential Aggregate Signature Scheme]
A sequential aggregate signature scheme $\SAS$ consists of the following tuple of algorithms $(\nSASSetup, \allowbreak \nSASKGen, \nSASKVerify, \allowbreak \nSASSign, \nSASSVerify)$.
\begin{itemize}
\item $\nSASSetup (1^{\lambda}):$ A setup algorithm takes as an input a security parameter $1^{\lambda}$. It returns the public parameter $\pp$.
In this work, we assume that $\pp$ defines a message space and represents this space by $\mathcal{M}$.
We omit a public parameter $\pp$ in the input of all algorithms except for $\nOMSKGen$.

\item $\nSASKGen (\pp):$ A key generation algorithm takes as an input a public parameter $\pp$. It returns a public key $\pk$ and a secret key $\sk$.

\item $\nSASKVerify(\pk, \sk):$ A key verification algorithm takes as an input a public key $\pk$ and a secret key $\sk$. It returns a bit $b \in  \{0, 1\}$.

\item $\nSASSign(\sk_{n}, L_{n-1} = (\pk_{1}, \dots, \pk_{n-1}), (m_{1}, \dots,  m_{n-1}), m_{n}, \sigma_{n-1}):$ A signing algorithm takes as an input a secret key $\sk_{n}$, a list of public keys $L_{n-1}=(\pk_{1}, \dots, \pk_{n-1})$, a list of messages $(m_{1}, \dots,  m_{n-1})$, a message $m_{n}$, and a signature $\sigma_{n-1}$. 
It returns an updated signature $\sigma_{n}$ or $\bot$.

\item $\nSASSVerify (L_{n} = (\pk_{1}, \dots, \pk_{n}), (m_{1}, \dots, m_{n}), \sigma):$ A signature verification algorithm takes as an input a list of public key $L$, a list of message  $m_{i}$, and a signature $\sigma$.
It returns a bit $b \in  \{0, 1\}$.
\end{itemize}
\end{definition}

\paragraph{\bf Correctness.}
$\SAS$ satisfies correctness if $\forall \lambda, n \in \N$, $\pp \leftarrow \nSASSetup (1^{\lambda})$, $\forall m_{i} \in \mathcal{M}$ for $i \in [n]$, $(\pk_{i}, \sk_{i}) \leftarrow \nOMSKGen(\pp)$ for $i \in [n]$, $L_{0} = \epsilon$, $\sigma_{0} = \epsilon$, $L_{i}=(\pk_{1}, \dots, \pk_{i})$ for $i \in [n]$, and $\sigma_{i} \leftarrow \nOMSSign(\sk_{i}, L_{i-1}, (m_{1}, \dots, m_{i-1}), m_{i}, \sigma_{i-1})$ for $i \in [n]$, 
\begin{enumerate}
\item For $i \in [n]$, $\nSASKVerify(\pk_{i}, \sk_{i}) = 1$
\item For $i \in [n]$, if elements in $L_{i}$ are distinct, $\nSASSVerify(L_{i}, (m_{1}, \dots, m_{i}), \sigma_{i}) = 1$
\end{enumerate}
holds.

We review a security notion for a sequential aggregate signature scheme.
In this work, we consider the existentially unforgeable under chosen message attacks (EUF-CMA) security in the certified key model \cite{LOSSW06,LLY13}.
This security guarantees that it is hard for any PPT adversary to forge an aggregate signature on a set of messages for a set of signers whose secret keys are not all known to the adversary.
In the certified key model, the adversary is allowed to generate $(\pk, \sk)$ for any non-target signer and use this key to generate a forgery. 
But the adversary must register $\pk$ by proving the knowledge of $\sk$ via the key registration query.

\begin{definition}[EUF-CMA Security in CK model]\label{Def_SAS_EUF_CK}
Let $\SAS$ be a sequential aggregate signature scheme and $\A$ be a PPT adversary.
The existentially unforgeable under chosen message attacks (EUF-CMA) security in the certified key model is defined by the following game $\sfGame^{\EUFCMACK}_{\SAS, \A}(1^{\lambda})$ between the challenger $\C$ and an adversary $\A$.

\begin{itemize}
\item {\bf Initial setup:}
$\C$ initializes sets $S^{\Cert} \leftarrow \{\}$, $S^{\Sign} \leftarrow \{\}$, runs $\pp \leftarrow \nSASSetup (1^{\lambda})$, $(\pk^{*}, \sk^{*}) \leftarrow \nSASKGen(\pp)$, and sends $(\pp, \pk^{*})$ to $\A$.
\item $\A$ makes key registration queries and signing queries polynomial many times.
\begin{itemize}
\item {\bf Key registration query:}
For a key registration query on $(\pk, \sk)$, $\C$ checks the validity of $(\pk, \sk)$.
If $\nSASKVerify(\pk, \sk) = 1$, $\C$ updates $S^{\Cert} \leftarrow S^{\Cert} \cup \{\pk\}$ and returns $\accept$.
Otherwise, $\C$ returns $\reject$.

\item {\bf Signing query:}
For an signing query on $(L_{n-1}=(\pk_{1}, \dots, \pk_{n-1}), (m_{1}, \allowbreak \dots, m_{n-1}), m_{n},  \allowbreak \sigma_{n-1})$, $\C$ proceeds as follows. \\
If $L_{n-1} \neq  \epsilon$ (i.e., $n-1 \geq 1$), $\C$ checks that the following conditions:
\begin{enumerate}
\item $\pk_{i} \in Q^{\Cert}$ for $i \in [n-1] ;$
\item $\nSASSVerify(L_{n-1}, (m_{1}, \dots, m_{n-1}), \sigma_{n-1}) = 1$.
\end{enumerate}
If at least one of the above conditions does not hold, $\mathcal{O}^{\Sign}$ returns $\bot$.
Then $\C$ runs $\sigma_{n} \leftarrow \nSASSign(\sk^{*}, L_{n-1}, (m_{1}, \dots,  m_{n-1}), m_{n}, \sigma_{n-1})$ and updates $S^{\Sign} \leftarrow S^{\Sign} \cup \{ m_{n} \}$, and returns $\sigma_{n}$ to $\A$.
\end{itemize}
\item  {\bf End of the game:} 
$\A$ finally outputs a forgery $(L^{*}_{n^{*}}=(\pk^{*}_{1}, \dots, \pk^{*}_{n^{*}}),  (m_{1}^{*},  \allowbreak\dots, m^{*}_{n^{*}}), \sigma^{*}_{n^{*}})$ to $\C$.
If the following conditions hold, return $1$.
\begin{enumerate}
\item $\nSASSVerify(L^{*}_{n^{*}}, (m_{1}^{*}, \dots, m^{*}_{n^{*}}), \sigma^{*}_{n^{*}}) = 1 ;$
\item There exists $i^{*} \in [n^{*}]$ such that $\pk^{*}_{i^{*}} = \pk^{*} \land m^{*}_{i^{*}} \notin S^{\Sign} ;$
\item $\forall j \in [n^{*}]$ such that $\pk_{j} \neq \pk^{*}$, $\pk_{j} \in S^{\Cert}.$
\end{enumerate}
\end{itemize}

The advantage of $\A$ is defined by $\Adv^{\EUFCMACK}_{\SAS, \A}(\lambda) \allowbreak:=$ $\Pr[\sfGame^{\EUFCMACK}_{\SAS, \A}(1^{\lambda}) \Rightarrow 1]$.
$\SAS$ satisfies the EUF-CMA security in the CK model if for any PPT adversary $\A$, $\Adv^{\EUFCMACK}_{\SAS, \A}(\lambda)$ is negligible in $\lambda$.
\end{definition}

\subsection{Schemes $\SAS_{\CK}$ \cite{CK20} and $\SAS_{\OursCK}$}\label{Subsec_SAS_Ours_CK}

In this section, we give our sequential aggregate signature scheme $\SAS_{\OursCK}$ by modifying $\SAS_{\CK}$.
Commonly, both $\SAS_{\CK}$ \cite{CK20} and $\SAS_{\OursCK}$ are obtained by applying the public-key sharing technique \cite{LLY13} to $\DS_{\CK}$.
This technique allows us to construct a sequential aggregate signature scheme from a digital signature scheme with randomizable property \cite{LLY13,PS16,CK20}.
This technique is applied to a randomizable signature scheme as follows.
First, divide elements in the public key of the original signature scheme into shared (common) elements and non-shared elements.
Then, add shared elements into the public parameter and modify a key generation of the original signature to generate only non-shared elements.
By this modification, we force signers to use the same elements in public parameters in the signing.

Here, we briefly explain how to apply this technique to $\SAS_{\CK}$.
We divide elements in the public key of $\SAS_{\CK}$ into shared elements $(\widetilde{H}, \widetilde{D}, \widetilde{U})$ and non-shared elements $(\widetilde{V}_{j})_{j \in [\ell]}$ and obtain our scheme $\SAS_{\OursCK}$.
The difference between $\SAS_{\CK}$ \cite{CK20} and $\SAS_{\OursCK}$ is the message space.
The message space of $\SAS_{\CK}$ is $\Z^{*}_{p}$ and $\SAS_{\OursCK}$ is extended to $(\Z^{*}_{p})^{\ell}$.
$\SAS_{\CK}$ with $\ell=1$ corresponds to $\SAS_{\OursCK}$.
We give the full description of our scheme $\SAS_{\OursCK}$ in Fig. \ref{SAS_OursCK_const}.

\begin{figure}[h]
\centering
\begin{tabular}{|l|}
\hline
$\nSASSetup(1^{\lambda}):$\\ 
~~~$\BGcal_{3}= (p, \G, \widetilde{\G}, \G_T, e) \leftarrow \BG(1^\lambda)$, $G \xleftarrow{\$} \G^*$, $\widetilde{G} \xleftarrow{\$} \widetilde{\G}^*$,\\
~~~$d, x_{1},  x_{2}  \xleftarrow{\$} \mathbb{Z}^{*}_{p}$, $X_{1} \leftarrow G^{x_{1}}$, $X_{2} \leftarrow G^{x_{2}}$, $\widetilde{H} \xleftarrow{\$} \widetilde{\G}$, $\widetilde{D} \leftarrow \widetilde{H}^{d}$, $\widetilde{U} \leftarrow \widetilde{H}^{x_{2} - dx_{1}}$.\\
~~~Return $\pp \leftarrow (\BGcal_{3}, G_{1}, G_{2}, X_{1}, X_{2}, \widetilde{H}, \widetilde{D}, \widetilde{U})$.\\

$\nSASKGen(\pp):$\\
~~~$y_{1, 1},  y_{1, 2} \xleftarrow{\$} (\mathbb{Z}^{*}_{p})^{2}, \textcolor{red}{(y_{j,1},  y_{j,2})_{j \in \{2, \dots \ell \}}  \xleftarrow{\$} (\mathbb{Z}^{*}_{p})^{2(\ell-1)}}$,\\
~~~$\widetilde{V}_{1} \leftarrow \widetilde{H}^{y_{1, 2}} ( \widetilde{D}^{y_{1, 1}})^{-1}$, $\textcolor{red}{(\widetilde{V}_{j} \leftarrow \widetilde{H}^{y_{j, 2}} ( \widetilde{D}^{y_{j, 1}})^{-1})_{j \in \{2, \dots \ell \}}}$.\\
~~~Return $(\pk, \sk) \leftarrow ((\widetilde{V}_{1}, \textcolor{red}{(\widetilde{V}_{j})_{j \in \{2, \dots \ell \}}}), (y_{1, 1},  y_{1, 2}, \textcolor{red}{(y_{j,1},  y_{j,2})_{j \in \{2, \dots \ell \}}}))$.\\

$\nSASKVerify(\pk = (\widetilde{V}_{1}, \textcolor{red}{(\widetilde{V}_{j})_{j \in \{2, \dots \ell \}}}), \sk = (y_{1, 1},  y_{1, 2}, \textcolor{red}{(y_{j,1},  y_{j,2})_{j \in \{2, \dots \ell \}}})):$\\
~~~If $\widetilde{V}_{1} = \widetilde{H}^{y_{1, 2}} (\widetilde{D}^{y_{1, 1}})^{-1} \land$ \textcolor{red}{$\widetilde{V}_{j} = \widetilde{H}^{y_{j, 2}} (\widetilde{D}^{y_{j, 1}})^{-1}$ for $j \in [\ell] \backslash \{1\}$}, return $1$.\\
~~~Otherwise return $0$.\\

$\nSASSign(\sk_{n}= (y_{n, 1, 1},  y_{n, 1, 2}, \textcolor{red}{(y_{n, j, 1},  y_{n, j, 2})_{j \in \{2, \dots \ell \}}}), L_{n-1},$\\
~~~~~~~~~$(m_{i} = (m_{i,1}, \textcolor{red}{(m_{i, j})_{j \in \{2, \dots, \ell\}}}))_{i \in [n-1]},$ $m_{n} =  (m_{n, 1}, \textcolor{red}{(m_{i, j})_{j \in \{2, \dots, \ell\}}}),$\\
~~~~~~~~~~~~~~~~~~~~~~~~~~~~~~~~~~~~~~~~~~~~~~~~~~~~~~~~~~~~~~~$\sigma_{n-1}= (A_{n-1}, B_{n-1}, C_{n-1})):$\\
~~~If $m_{n} = 0$, return $\bot$.\\
~~~If $L_{n-1} = \epsilon$ (i.e., $n-1=0$), $\sigma_{0} = (A_{0}, B_{0}, C_{0}) \leftarrow(G, X_{1}, X_{2})$.\\
~~~If $L_{n-1} \neq \epsilon$, \\
~~~~~If $\nSASSVerify(L_{n-1}, (m_{1}, \dots, m_{n-1}), \sigma_{n-1}) = 0$, return $\bot$.\\
~~~~~If there exists $(j, j')$ such that $j \neq j' \land \pk_{j} = \pk_{j'}$, return $\bot$. \\
~~~$r_{n} \xleftarrow{\$} \mathbb{Z}^{*}_{p}$, $A_{n} \leftarrow A^{r_{n}}_{n-1}$, $B_{n} \leftarrow (B_{n-1} A^{m_{n, 1}y_{n, 1, 1} \textcolor{red}{+ \sum_{j \in \{2, \dots, \ell\}} m_{n, j}y_{n, j,1}}}_{n-1})^{r_{n}}$,\\ ~~~$C_{n} \leftarrow (C_{n-1} A^{m_{n, 1}y_{n, 1, 2} \textcolor{red}{+ \sum_{j \in \{2, \dots, \ell\}} m_{n, j}y_{n, j,2}}}_{n-1})^{r_{n}}$.\\
~~~Return $\sigma_{n} \leftarrow (A_{n}, B_{n}, C_{n})$.\\

$\nSASSVerify(L_{n}=(\pk_{i} = (\widetilde{V}_{i, 1}, \textcolor{red}{(\widetilde{V}_{i, j})_{j \in \{2, \dots \ell \}}})_{i \in [n]},$\\
~~~~~~~~~~~~~~~~~~$(m_{i} = (m_{i,1}, \textcolor{red}{(m_{i, j})_{j \in \{2, \dots, \ell\}}}))_{i \in [n-1]}, \sigma=(A, B, C)):$\\
~~~If $m_{i, 1} \neq 0,  \textcolor{red}{m_{i, j} \neq 0}$ for $i \in [n], \textcolor{red} {j \in \{2, \dots, \ell\}}$  $\land A \neq 1_{\G}$\\
~~~~~~$\land e(A, \widetilde{U} \prod_{i \in [n]} (\widetilde{V}_{i, 1}^{m_{i, 1}} \textcolor{red}{\prod_{j \in \{2, \dots, \ell \} }\widetilde{V}^{m_{i, j}}_{i, j}}) \cdot e(B, \widetilde{D}) = e(C, \widetilde{H})$, return $1$.\\
~~~Otherwise return $0$.\\
\hline
\end{tabular}
\caption{\small The sequential aggregate signature scheme $\SAS_{\CK}$ \cite{CK20} and $\SAS_{\OursCK}$. 
The differences between $\SAS_{\CK}$ and $\SAS_{\OursCK}$ are highlighted in \textcolor{red}{red}. $\SAS_{\CK}$ corresponds to the part written in black.
$\SAS_{\OursCK}$ is obtained by adding \textcolor{red}{the elements and processes colored by red} to $\SAS_{\CK}$.}
\label{SAS_OursCK_const}
\end{figure}

\begin{theorem}\label{Th_ASAOursCK_EUF_from_DSOursCK_EUF}
If $\DS_{\CK}$ satisfies the EUF-CMA security and the DBP assumption on $\G$ holds, then $\SAS_{\OursCK}$ satisfies the EUF-CMA security in the certified key model.
\end{theorem}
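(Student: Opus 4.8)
The plan is to reduce the $\EUFCMACK$ security of $\SAS_{\OursCK}$ to the $\EUFCMA$ security of $\DS_{\CK}$, using the $\DBP$ assumption to dispose of one degenerate case. The guiding observation is that the public-key sharing construction keeps the shared data $(\widetilde{H}, \widetilde{D}, \widetilde{U})$ together with the target non-shared key $(\widetilde{V}^{*}_{j})_{j}$ in exactly the shape of a $\DS_{\CK}$ public key, and that the verification equation, after taking discrete logarithms, is a single linear relation among the exponents of $A, B, C$. This linearity is what makes contributions of co-signers separable.

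First I would build a reduction $\B$ that, on input a $\DS_{\CK}$ public key $(\widetilde{H}, \widetilde{D}, \widetilde{U}, (\widetilde{V}^{*}_{j})_{j})$ and access to the $\DS_{\CK}$ signing oracle, simulates the game for $\A$. The only parts of $\pp$ not already contained in the $\DS_{\CK}$ public key are $X_{1} = G^{x_{1}}$ and $X_{2} = G^{x_{2}}$; since $x_{1}, x_{2}$ are unknown to $\B$, I would recover them in exponent form by querying the base oracle on the zero message $0 \in (\Z_p)^{\ell}$ to get $(A_{0}, B_{0}, C_{0}) = (G^{s_{0}}, G^{s_{0}x_{1}}, G^{s_{0}x_{2}})$, then use $G' := A_{0}$ as the public generator with $X_{1} := B_{0}$ and $X_{2} := C_{0}$. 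These are distributed exactly as in $\nSASSetup$ and are consistent with $\widetilde{U} = \widetilde{H}^{x_{2}-dx_{1}}$. A key-registration query is answered by running $\nSASKVerify$ and storing the revealed secret key, so $\B$ learns the secret key of every certified co-signer. A signing query for the target is answered by querying the base oracle on the target message $m_{n}$, multiplying in the contributions of the (known-key) co-signers of $L_{n-1}$, and re-randomizing; because $\G$ has prime order the resulting $A_{n}$ is uniform in $\G^{*}$, so the simulation is perfect, and the message recorded in $S^{\Sign}$ is precisely the one queried to the base oracle.

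The extraction is the core step. Given $\A$'s forgery $(L^{*}, (m^{*}_{i})_{i}, (A, B, C))$ with the target at a position $i^{*}$ carrying $m^{*}_{i^{*}} \notin S^{\Sign}$, $\B$ peels off the contributions of the other signers, whose secret keys it holds by the certified-key condition, setting $B^{*} \leftarrow B \cdot A^{-K_{1}}$ and $C^{*} \leftarrow C \cdot A^{-K_{2}}$ for the (computable) aggregate co-signer exponents $K_{1}, K_{2}$. A direct exponent computation shows that the single linear verification relation for the full list collapses to the $\DS_{\CK}$ relation for $(A, B^{*}, C^{*})$ on the target's effective message; together with $A \neq 1_{\G}$ this gives a valid $\DS_{\CK}$ signature on a message never queried to the base oracle, i.e. a forgery. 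Crucially, this step is insensitive to whether $(A,B,C)$ is well-formed, since it uses only the linear relation.

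The main obstacle is the complementary event, and this is exactly where $\DBP$ enters. The peeling isolates the target's contribution as a single base message only when that effective message is fresh; if the target's aggregated message coincides with an already-signed one, the peeled triple lands on a non-fresh message and is useless against $\DS_{\CK}$. I would treat this event with a second reduction that itself samples all the $\G$-side secrets $x_{1}, x_{2}$ and the target key, but embeds a $\DBP$ instance in $(\widetilde{H}, \widetilde{D})$ so that $d$ is unknown; this still suffices to compute $\widetilde{U} = \widetilde{H}^{x_{2}}\widetilde{D}^{-x_{1}}$, to form all $\widetilde{V}$'s, to run $\nSASKVerify$, and to answer every signing query, since signing never needs $d$. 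In this event the peeled triple $(A, B^{*}, C^{*})$ deviates from the honestly recomputable canonical signature on the same effective message and the same $A$; as both satisfy the same linear relation, their quotient gives a pair $(P, Q) \neq (1_{\G}, 1_{\G})$ with $e(P, \widetilde{D}) \cdot e(Q, \widetilde{H}) = 1_{\G_{T}}$, a $\DBP$ solution. The delicate points I expect to fight with are showing that this residual event genuinely forces a deviation from the canonical signature so that $(P,Q)$ is nontrivial, and carrying out the bookkeeping that the $\DS_{\CK}$ reduction (fresh effective message) and the $\DBP$ reduction (repeated effective message) jointly cover every winning forgery.
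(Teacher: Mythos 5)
Your main reduction is the same as the paper's: derive $G', X_{1}, X_{2}$ from a base-oracle query on the all-zero message, answer signing queries by querying the $\DS_{\CK}$ oracle on $m_{n}$ and folding in the certified co-signers' exponents, and extract a forgery by peeling those exponents off; your remark that the extraction only needs the single linear verification relation is sound. The genuine gap is your claim that the signing simulation is \emph{perfect}. It is not, and this is precisely the flaw in the original CK20 argument that this theorem is written to repair. For a fixed $A_{n-1}$, the SAS verification equation is a single linear relation in the exponents of $B_{n-1}$ and $C_{n-1}$ (the $d$-weighted combination of the two canonical relations), so for each $A_{n-1}$ there is a one-parameter family of valid pairs $(B_{n-1}, C_{n-1})$. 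If $\A$ submits a valid but non-canonical $\sigma_{n-1}$, the real signing oracle propagates the deviation, since it outputs $B_{n}=(B_{n-1}A_{n-1}^{\sum_{j} m_{n,j}y_{n,j,1}})^{r_{n}}$ built from the submitted $B_{n-1}$, whereas your $\B$ returns the canonical signature reconstructed from a fresh base-oracle answer; the two output distributions then differ. The paper isolates exactly this event ($\Bad$: $\A$ submits a valid $\sigma_{n-1}$ with $B_{n-1}\neq A_{n-1}^{x_{1}+\sum_{i}\sum_{j}m_{i,j}y_{i,j,1}}$) and proves in Lemma~\ref{Lem_Bad_Does_not_occurs} that it occurs with negligible probability under the DBP assumption on $\G$: a reduction $\R$ that samples all secret exponents itself but embeds the DBP instance as $(\widetilde{H},\widetilde{D})$ converts any such query into a nontrivial pair $(E,F)$ with $e(E,\widetilde{H})\cdot e(F,\widetilde{D})=1_{\G_{T}}$ --- the very quotient trick you describe, but aimed at signing queries rather than at the forgery.

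By contrast, the event on which you spend the DBP assumption cannot occur. Winning condition 2 of the certified-key EUF-CMA game already requires $m^{*}_{i^{*}}\notin S^{\Sign}$, and in your simulation $S^{\Sign}$ coincides with the set of messages sent to the base oracle except for the all-zero message, which can never appear in a valid forgery because $\nSASSVerify$ rejects zero message components. Hence freshness of the extracted $\DS_{\CK}$ message is automatic and your ``repeated effective message'' reduction is vacuous; worse, as you yourself anticipate, in that case nontriviality of $(P,Q)$ could not be forced anyway, since a re-randomization of an earlier oracle answer yields the trivial quotient $(1_{\G},1_{\G})$. The net effect is that DBP is pointed at a non-event, while the place where it is genuinely needed --- justifying the signing-oracle simulation --- is left as an unsupported perfect-simulation claim. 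To repair your proof, delete the second reduction, condition the analysis of the first reduction on $\lnot\Bad$, and prove $\Pr[\Bad]=\negl(\lambda)$ under DBP on $\G$ as in Lemma~\ref{Lem_Bad_Does_not_occurs}.
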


Below, we prove Theorem \ref{Th_ASAOursCK_EUF_from_DSOursCK_EUF} by extending the security proof of \cite{CK20}.
The difference between our security proof and their security proof is the precise discussion of a signature distribution in a signing query.
If an adversary $\A$ of $\SAS_{\OursCK}$ makes a signature query that satisfies the specific condition, the simulated signatures have a different distribution from the original EUF-CMA game.
In our security proof, we introduce the event $\Bad$ that $\A$ makes a query with this condition.
Then, we prove that $\Bad$ occurs with negligible probability in Lemma \ref{Lem_Bad_Does_not_occurs}.
This lemma fills the logical gap in their security proof.

\begin{proof}
Let $\A$ be a PPT adversary for the EUF-CMA security of $\SAS_{\OursCK}$ in the certified key model.
First, we give a reduction algorithm $\B$ for the EUF-CMA security of $\DS_{\OursCK}$ as follows.

\begin{itemize}
\item {\bf Initial setup:}
$\B$ takes as an input an instance $(\pp, \pk^{*}) = ((\BGcal_{3}, G, \widetilde{G}), (\widetilde{H}, \allowbreak \widetilde{D}, \widetilde{U}, (\widetilde{V}_{j})_{j \in [\ell]} \allowbreak \widetilde{W})$ of the EUF-CMA security game of $\DS_{\CK}$.
$\B$ initialize $S^{\Cert} \leftarrow \{\}$, $S^{\Sign} \leftarrow \{\}$, makes signing query on a message $(0,\dots 0)$ to the signing oracle for $\DS_{\CK}$, and obtains a signature $\sigma' = (A', B', C')$.
$\B$ sets $G' \leftarrow A'$, $X_{1} \leftarrow B'$, $X_{2} \leftarrow C'$.
Then, $\B$ sends $(\pp', \pk') = ((\BGcal_{3}, G', \widetilde{G}, X_{1}, X_{2}, \allowbreak \widetilde{H}, \allowbreak \widetilde{D}, \widetilde{U}), ((\widetilde{V}_{j})_{j \in [\ell]}))$ to $\A$.

\item {\bf Key registration query:}
For a key registration query on $(\pk, \sk)$, $\B$ checks $\nOMSKVerify(\pk , \sk) \allowbreak = 1$.
If this condition holds, $\B$ updates $S^{\Cert} \leftarrow S^{\Cert} \cup \{(\pk , \sk)\}$ and returns $\accept$.
Otherwise, $\B$ returns $\reject$.

\item {\bf Signing query:}
For a signing query on $(L_{n-1} = (\pk_{1}, \dots, \pk_{n-1}), (m_{i} = (m_{i, j})_{j \in [\ell]})_{i \in [n-1]}$, $m_{n} =  (m_{n, j})_{j \in [\ell]}), \sigma_{n-1}=(A_{n-1}, B_{n-1}, C_{n-1})$, if $L_{n-1}  \neq  \epsilon$ (i.e., $n-1 \geq 1$), $\B$ checks that the following conditions:
\begin{enumerate}
\item $\pk^{*} \notin L_{n-1} ;$
\item $\pk_{i} \in Q^{\Cert}$ for $i \in [n-1] ;$
\item $\nOMSSVerify(L_{n-1}, (m_{i})_{i \in [n-1]}, \sigma_{n-1}) = 1$.
\end{enumerate}
If at least one of the above conditions does not hold, $\B$ returns $\bot$.
Then, for $i \in [n-1]$, $\B$ retrieves $(\pk_{i}, \sk_{i} = ((y_{i, j,1},  y_{i, j,2})_{j \in [\ell]})$ from $Q^{\Cert}$.
$\B$ makes a signing query on a message $m_{n} =  (m_{n, j})_{j \in [\ell]}$ to the signing oracle for $\DS_{\CK}$, and obtains a signature $(A', B', C')$. 
Then $\B$ computes $A_{n} \leftarrow A'$, $B_{n} \leftarrow B' (A')^{\sum_{i \in [n-1]} \sum_{j \in [\ell]}m_{i, j}y_{i, j, 1}}$, $C_{n} \leftarrow C' (A')^{\sum_{i \in [n-1]} \sum_{j \in [\ell]}m_{i, j}y_{i, j, 2}}$.
Then $\B$ updates $S^{\Sign} \leftarrow S^{\Sign} \cup \{m_{n}\}$.
$\B$ returns $\sigma_{n} = (A_{n}, B_{n}, C_{n})$ to $\A$.

\item {\bf End of the game:}
After receiving the forgery $(L^{*}_{n^{*}}=(\pk^{*}_{1}, \dots, \pk^{*}_{n^{*}}), (m^{*}_{i} = (m^{*}_{i, j})_{j \in [\ell]})_{i \in [n^{*}]}, \allowbreak \sigma^{*}_{n^{*}} = (A^{*}_{n^{*}}, B^{*}_{n^{*}}, C^{*}_{n^{*}}))$ from $\A$, 
$\B$ checks the following conditions hold.
\begin{enumerate}
\item $\nSASSVerify(L^{*}_{n^{*}}, (m^{*}_{i})_{i \in [n^{*}]}, \sigma^{*}_{n^{*}}) = 1;$
\item There exists $i^{*} \in [n^{*}]$ such that $\pk^{*}_{i^{*}} = \pk^{*} \land m_{i^{*}}^{*} \notin S^{\Sign};$
\item $\forall j \in [n^{*}]$ such that $\pk_{j} \neq \pk^{*}$, $\pk_{j} \in Q^{\Cert}.$
\end{enumerate}
For $i \in [n^{*}] \backslash \{i^{*}\}$, $\B$ retrieves $(\pk^{*}_{i}, \sk^{*}_{i} = ((y_{i, j,1},  y_{i, j,2})_{j \in [\ell]})$ from $Q^{\Cert}$.
Then, $\B$ computes $\hat{A} \leftarrow A^{*}_{n^{*}}$, $\hat{B} \leftarrow B^{*}_{n^{*}}((A^{*}_{n^{*}})^{\sum_{i \in [n^{*}] \backslash \{i^{*}\}} \sum_{j \in [\ell]}m^{*}_{i, j}y_{i, j, 1}})^{-1}$, $C^{*} \leftarrow \hat{C}_{n^{*}}((A^{*}_{n^{*}})^{\sum_{i \in [n^{*}] \backslash \{i^{*}\}} \sum_{j \in [\ell]}m^{*}_{i, j}y_{i, j, 2}})^{-1}$.
Finally, $\B$ outputs $(\hat{m}, \hat{\sigma}) = (m_{i^{*}}^{*}, \allowbreak (\hat{A}, \hat{B}, \hat{C}))$ as a forgery of $\DS_{\CK}$.
\end{itemize}

Next, we discuss the EUF-CMA security game simulation by $\B$.
Let $\Bad$ be the event that $\A$ submits a valid signature $(L_{n-1}, \allowbreak (m_{i})_{i \in [n-1]}$, $m_{n}, \sigma_{n-1})$ such that $B_{n-1} \neq A^{x_{1} + \sum_{i \in [n-1]} \sum_{j \in [\ell]} m_{i, j}y_{i, j, 1} }_{n-1}$ for some signing query.
We confirm that if the event $\Bad$ does not occur, $\B$ simulates the EUF-CMA game of $\SAS_{\OursCK}$.
\begin{itemize}
\item {\bf Initial setup:}
In the initial setup, the difference between the EUF-CMA game of $\SAS_{\OursCK}$ and the simulation by $\B$ is a generation of $(G', X_{1}, X_{2})$. 
In the original EUF-CMA game of $\SAS_{\OursCK}$, these elements are generated as $G' \xleftarrow{\$} {\G}^{*}$, $x_{1}, x_{2} \xleftarrow{\$} \Z_{p}$, $X_{1}, \leftarrow G^{x_{1}}$, and $X_{2}, \leftarrow G^{x_{2}}$.
In the simulation of $\B$, these elements are generated as $G' \leftarrow A'$, $X_{1}, \leftarrow B'$, and $X_{2}, \leftarrow C'$ where $\sigma' = (A', B', C')$ is the signature of $\DS_{\CK}$ on a message $(0, \dots, 0)$.
Since $\sigma' = (A', B', C')$ is generated as $r \xleftarrow{\$} \mathbb{Z}^{*}_{p}$, $A' \leftarrow G^{r}$, $B' \leftarrow (A')^{x_{1}}$, $C' \leftarrow (A')^{x_{2}}$, the distributions of $(G', X_{1}, X_{2})$ between the original game and the simulated game by $\B$ are identical.
Thus, $\B$ simulates $(\pp', \pk')$.

\item {\bf Key registration query:} Clearly, $\B$ simulates a key registration.

\item {\bf Signing query:}
For a signing query on $(L_{n-1} = (\pk_{1}, \dots, \pk_{n-1}), (m_{i} = (m_{i, j})_{j \in [\ell]})_{i \in [n-1]}$, $m_{n} =  (m_{n, j})_{j \in [\ell]}, \sigma_{n-1}= (A_{n-1}, B_{n-1}, C_{n-1}))$, if $L_{n-1}  \neq  \epsilon$, the difference between the the EUF-CMA game of $\SAS_{\OursCK}$ and the simulation by $\B$ is generation of a signature $\sigma_{n} = (A_{n}, B_{n}, C_{n})$. 

In the simulated game by $\B$, $\sigma_{n} = (A_{n}, B_{n}, C_{n})$ is generated as follows.
$\B$ makes a signing query on a message $m_{n} =  (m_{n, j})_{j \in [\ell]}$ to the signing oracle for $\DS_{\CK}$, and obtains a signature $(A', B', C') = (A', (A')^{x_{1} + \sum_{j \in [\ell]}m_{i, j}y_{i, j, 1}}, \allowbreak (A')^{x_{2} + \sum_{j \in [\ell]}m_{i, j}y_{i, j, 1}})$ where $((y_{n j,1},  y_{n, j,2})_{j \in [\ell]}) = \sk^{*}$.
Then $\B$ computes $B_{n} \leftarrow B' (A')^{\sum_{i \in [n-1]} \sum_{j \in [\ell]}m_{i, j}y_{i, j, 1}}$, $C_{n} \leftarrow C' (A')^{\sum_{i \in [n-1]} \sum_{j \in [\ell]}m_{i, j}y_{i, j, 2}}$.
The output is $\sigma_{n} = (A_{n}, B_{n}, C_{n}) =  (A', (A')^{x_{1} + \sum_{i \in [n]} \sum_{j \in [\ell]} m_{i, j}y_{i, j, 1} }, \allowbreak (A')^{x_{2} + \sum_{i \in [n]} \sum_{j \in [\ell]} m_{i, j}y_{i, j, 2} })$.

In the original EUF-CMA game of $\SAS_{\OursCK}$, $\sigma_{n} = (A_{n}, B_{n}, C_{n})$ is generated as $r_{n} \xleftarrow{\$} \mathbb{Z}^{*}_{p}$, $A_{n} \leftarrow A^{r_{n}}_{n-1}$, $B_{n} \leftarrow (B_{n-1} A^{\sum_{j \in [\ell]} m_{n, j}y_{n, j,1}}_{n-1})^{r_{n}}$, $C_{n} \leftarrow (C_{n-1} A^{\sum_{j \in [\ell]} m_{n, j}y_{n, j,2}}_{n-1})^{r_{n}}$ where $((y_{n j,1},  y_{n, j,2})_{j \in [\ell]}) = \sk^{*}$.
In the case of $\lnot \Bad$, $B_{n-1} = A^{x_{1} + \sum_{i \in [n-1]} \sum_{j \in [\ell]} m_{i, j}y_{i, j, 1} }_{n-1}$, $C_{n-1} = A^{x_{2} + \sum_{i \in [n-1]} \sum_{j \in [\ell]} m_{i, j}y_{i, j, 2} }_{n-1}$ holds.
The output signature in the original EUF-CMA game is $\sigma_{n} = (A_{n}, B_{n}, \allowbreak C_{n}) =  (A_{n}, \allowbreak A^{x_{1} + \sum_{i \in [n]} \sum_{j \in [\ell]} m_{i, j}y_{i, j, 1} }_{n-1}, \allowbreak  A^{x_{2} + \sum_{i \in [n]} \sum_{j \in [\ell]} m_{i, j}y_{i, j, 2} }_{n-1})$.
If $\Bad$ does not occur, the distributions of a signature $\sigma_{n}$ simulated by $\B$ and output by signing oracle in the original EUF-CMA game are identical.
\end{itemize}
Thus, if $\Bad$ does not occurs, $\B$ simulates the EUF-CMA game of $\SAS_{\OursCK}$.

Next, we consider the signing simulation in the event $\Bad$.
The signing simulation of our proof for $\SAS_{\Ours}$ with $\ell=1$ and the proof by Chatterjee and Kabaleeshwaran \cite{CK20} is the same.
They claimed that the distribution of simulated signatures is ``It is easy to see that the signature generated above is properly distributed.''. 
We have already confirmed their claim is correct in the case where $\Bad$ does not occur.
If $\Bad$ occurs, their claim is non-trivial since the distribution of simulated signatures by $\B$ is different from the original game.
However, they did not discuss the distribution of signatures in this case.
To fill this gap, we prove that $\Bad$ occurs with negligible probability in Lemma~\ref{Lem_Bad_Does_not_occurs}.

\begin{lemma}\label{Lem_Bad_Does_not_occurs}
If the double pairing (DBP) assumption on $\G$ holds (See Appendix \ref{Appen_BG} for the DBP assumption), the event $\Bad$ occurs with negligible probability in~$\lambda$.
\end{lemma}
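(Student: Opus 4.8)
The plan is to reduce the event $\Bad$ to the DBP problem on $\G$. I would build a solver $\R$ that, on input a DBP instance $(\widetilde{H},\widetilde{D}) \in \widetilde{\G}^{2}$, simulates the entire EUF-CMA game of $\SAS_{\OursCK}$ for $\A$ and, whenever $\A$ feeds a valid-but-malformed aggregate signature into a signing query, reads off a nontrivial DBP solution. The first observation making this possible is that $\R$ never needs the discrete logarithm $d = \log_{\widetilde{H}}\widetilde{D}$: every $d$-dependent public element appears only in the twisted form $\widetilde{H}^{\bullet}\widetilde{D}^{-\bullet}$. Concretely $\R$ samples $x_{1},x_{2}$ and the target key exponents $(y_{j,1},y_{j,2})_{j\in[\ell]}$ itself, sets $\widetilde{U} \leftarrow \widetilde{H}^{x_{2}}\widetilde{D}^{-x_{1}}$ and $\widetilde{V}_{j} \leftarrow \widetilde{H}^{y_{j,2}}\widetilde{D}^{-y_{j,1}}$ (these equal the honest $\widetilde{H}^{x_{2}-dx_{1}}$ and $\widetilde{H}^{y_{j,2}-dy_{j,1}}$), and generates the remaining $\pp$ and $\pk^{*}$ components honestly. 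Since signing never touches $d$, $\R$ answers signing queries exactly as the real oracle using its knowledge of $x_{1},x_{2},(y_{j})$, and answers key-registration queries honestly. The instance distribution matches the game up to the statistically negligible gap between $\widetilde{\G}$ and $\widetilde{\G}^{*}$ (and between $\Z_{p}$ and $\Z^{*}_{p}$).

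The heart of the argument is rewriting the verification equation as a double pairing. Write $s_{1} = x_{1} + \sum_{i\in[n-1]}\sum_{j\in[\ell]} m_{i,j}y_{i,j,1}$ and $s_{2} = x_{2} + \sum_{i\in[n-1]}\sum_{j\in[\ell]} m_{i,j}y_{i,j,2}$. Substituting the twisted forms yields $\widetilde{U}\prod_{i,j}\widetilde{V}_{i,j}^{m_{i,j}} = \widetilde{H}^{s_{2}}\widetilde{D}^{-s_{1}}$, so validity of $\sigma_{n-1}=(A_{n-1},B_{n-1},C_{n-1})$ is equivalent to
$e(A_{n-1}^{s_{2}}C_{n-1}^{-1},\widetilde{H})\cdot e(B_{n-1}A_{n-1}^{-s_{1}},\widetilde{D}) = 1_{\G_{T}}$.
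Hence $\R$ sets $Z \leftarrow A_{n-1}^{s_{2}}C_{n-1}^{-1}$ and $R \leftarrow B_{n-1}A_{n-1}^{-s_{1}}$ and outputs $(Z,R)$. The event $\Bad$ is exactly $B_{n-1}\neq A_{n-1}^{s_{1}}$, i.e. $R\neq 1_{\G}$, so $(Z,R)$ is a nontrivial pair with $e(Z,\widetilde{H})\,e(R,\widetilde{D})=1_{\G_{T}}$: a valid DBP solution against $(\widetilde{H},\widetilde{D})$.

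To finish I would note that $\R$ can recognize $\Bad$ and compute $s_{1}$: the target exponents are its own, and every co-signer public key in $L_{n-1}$ must lie in $Q^{\Cert}$, so in the certified-key model $\R$ holds all co-signers' secret keys $(y_{i,j,1})$ supplied during registration. Thus $\R$ evaluates $s_{1}$ (and $s_{2}$) exactly, detects the first valid signing query with $R\neq 1_{\G}$, and halts with $(Z,R)$. Its success probability equals $\Pr[\Bad]$ up to the negligible distribution gap, giving $\Pr[\Bad]\le \Adv^{\DBP}_{\R}(\lambda)+\negl(\lambda)$, which is negligible under the DBP assumption.

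The main obstacle is conceptual rather than computational: one must see that a malformed intermediate signature does not merely break the honest invariant but algebraically forces the pair $(A_{n-1}^{s_{2}}C_{n-1}^{-1},\,B_{n-1}A_{n-1}^{-s_{1}})$ to annihilate under the double pairing with base $(\widetilde{H},\widetilde{D})$. The only genuinely load-bearing structural fact is that the certified-key model hands $\R$ every co-signer secret key, which is precisely what lets it reconstruct $s_{1},s_{2}$ (hence $Z,R$) without ever knowing $d$; absent certification this extraction would fail.
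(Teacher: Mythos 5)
Your proposal is correct and follows essentially the same reduction as the paper: embed the DBP pair $(\widetilde{H},\widetilde{D})$ into $\pp$ via $\widetilde{U}\leftarrow\widetilde{H}^{x_{2}}\widetilde{D}^{-x_{1}}$ (and twisted keys), simulate the game with full knowledge of all exponents, use the certified-key model to compute $s_{1},s_{2}$ and detect $\Bad$, and output the pair $(A_{n-1}^{s_{2}}C_{n-1}^{-1},\,B_{n-1}A_{n-1}^{-s_{1}})$ as the DBP solution. If anything, your write-up is slightly more careful than the paper's on the sign conventions in the extracted pair and on the negligible statistical gap between sampling from $\widetilde{\G}$ versus $\widetilde{\G}^{*}$.
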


\begin{proof}
We give the reduction $\R$ for the DBP problem as follows.

\begin{itemize}
\item {\bf Initial setup:}
$\R$ takes as an input an instance $(\BGcal_{3}, G, \widetilde{G}, \widetilde{H}, \widetilde{D})$ of the DBP problem.
$\R$ initialize $S^{\Cert} \leftarrow \{\}$, $S^{\Sign} \leftarrow \{\}$, chooses $x_{1}, x_{2} \xleftarrow{\$} \Z_{p}$, computes  $X_{1}, \leftarrow G^{x_{1}}$, $X_{2}, \leftarrow G^{x_{2}}$, $\widetilde{U} \leftarrow \widetilde{H}^{x_{2}} \widetilde{D}^{-x_{1}}$ and sets $\pp' \leftarrow (\BGcal_{3}, G', \widetilde{G}, X_{1}, X_{2}, \widetilde{H}, \allowbreak \widetilde{D}, \widetilde{U})$.
Then, $\R$ runs $\pk \leftarrow \nSASKGen(\pp)$ and sends $(\pp, \pk)$ to $\A$.

\item {{\bf Key registration query:} Same as in the original EUF-CMA game.}

\item {{\bf Signing query:} Same as in the original EUF-CMA game except for the following procedure.}
For a signing query on $(L_{n-1} = (\pk_{1}, \dots, \pk_{n-1}), (m_{i} = (m_{i, j})_{j \in [\ell]})_{i \in [n-1]}$, $m_{n} =  (m_{n, j})_{j \in [\ell]}), \sigma_{n-1}=(A_{n-1}, B_{n-1}, C_{n-1})I$, if a signature $ \sigma_{n-1}$ is valid, $\R$ additionally checks 
the pairing equation $e(A_{n-1}, \allowbreak X_{1}\prod_{i \in [n-1]} \prod_{j \in [\ell]} (\widetilde{G}^{y_{i, j, 1}})^{m_{i, j}}) = e(B_{n-1}, \widetilde{G})$.
This check is possible in the certified key model since $\R$ knows $\sk_{i} = ((y_{i, j, 1}, \allowbreak y_{i, j, 2})_{j \in [\ell]})$ for all $i \in [\ell]$.

If this pairing equation holds (i.e., $\Bad$ occurs), $\R$ extracts a solution of the DBP problem as follows.
$\R$ computes $E \leftarrow C_{n-1}A^{x_{2} + \sum_{i \in [n-1]} \sum_{j \in [\ell]} m_{i, j} y_{i, j, 2}}_{n-1}$, $F \leftarrow B_{n-1}A^{- x_{1} - \sum_{i \in [n-1]} \sum_{j \in [\ell]} m_{i, j} y_{i, j, 1}}_{n-1}$.
Then, $\R$ outputs $(E, F)$ as a solution to the DBP problem.
\end{itemize}

We confirm that $\Bad$ occurs, $\R$ outputs a solution to the DBP problem.
If $\Bad$ occurs, $\A$ submits a valid signature $(L_{n-1}, \allowbreak (m_{i})_{i \in [n-1]}$, $m_{n}, \sigma_{n-1})$ such that $B_{n-1} \neq  A^{x_{1} + \sum_{i \in [n-1]} \sum_{j \in [\ell]} m_{i, j}y_{i, j, 1} }_{n-1}$ for some signing query.
If the signature is valid, $\nSASSVerify(L_{n-1}, (m_{1}, \dots, m_{n-1}), \sigma_{n-1}) = 1$ holds.
By this fact, we see that $A_{n-1} \neq 1_{\G}$ and $e(A_{n-1}, \widetilde{U} \prod_{i \in [n-1]}  \prod_{j \in [\ell] }\widetilde{V}^{m_{i, j}}_{i, j}) \cdot e(B_{n-1}, \widetilde{D}) e(C_{n-1}, \widetilde{H}) ^{-1} \allowbreak = 1_{\G_{T}}$ hold.
In the certified key model, we can assure that $\widetilde{V}_{i, j} = \widetilde{H}^{y_{i, j, 2}} (\widetilde{D}^{y_{i, j, 1}})^{-1}$ holds for $i \in [n-1]$ and $j \in [\ell]$.
Then we have
\begin{equation*}
\begin{split}
&e(A_{n-1}, \widetilde{U} \prod_{i \in [n-1]}  \prod_{j \in [\ell] }\widetilde{V}^{m_{i, j}}_{i, j}) \\
&= e(A_{n-1}, \widetilde{H}^{x_{2}} \widetilde{D}^{-x_{1}}\prod_{i \in [n-1]}  \prod_{j \in [\ell] }(\widetilde{H}^{m_{i, j} y_{i, j, 2}} (\widetilde{D}^{-m_{i, j} y_{i, j, 1}})))\\
&= e(A^{x_{2} + \sum_{i \in [n-1]} \sum_{j \in [\ell]} m_{i, j} y_{i, j, 2}}_{n-1}, \widetilde{H}) \cdot e(A^{- x_{1} - \sum_{i \in [n-1]} \sum_{j \in [\ell]} m_{i, j} y_{i, j, 1}}_{n-1}, \widetilde{D}).
\end{split}
\end{equation*}
From this fact, we see that 
\begin{equation*}
\begin{split}
&e(A_{n-1}, \widetilde{U} \prod_{i \in [n-1]}  \prod_{j \in [\ell] }\widetilde{V}^{m_{i, j}}_{i, j}) \cdot e(B_{n-1}, \widetilde{D}) e(C_{n-1}, \widetilde{H}) ^{-1}\\
&=e(C_{n-1}A^{x_{2} + \sum_{i \in [n-1]} \sum_{j \in [\ell]} m_{i, j} y_{i, j, 2}}_{n-1}, \widetilde{H}) \\
&~~~~~~\cdot e(B_{n-1}A^{- x_{1} - \sum_{i \in [n-1]} \sum_{j \in [\ell]} m_{i, j} y_{i, j, 1}}_{n-1}, \widetilde{D})\\
&= e(E, \widetilde{H}) \cdot e(F, \widetilde{D}) = 1_{\G_{T}}
\end{split}
\end{equation*}
holds. 
Moreover, in the case where $\Bad$ occurs, $B_{n-1} \neq  A^{x_{1} + \sum_{i \in [n-1]} \sum_{j \in [\ell]} m_{i, j}y_{i, j, 1} }_{n-1}$ holds.
This implies that $(E, F) \neq (1_{\G}, 1_{\G})$ holds. 
We see that $(E, F)$ is a solution to the DBP problem.
Thus, If $\Bad$ occurs, $\R$ solves the DBP problem.
We conclude Lemma~\ref{Lem_Bad_Does_not_occurs}.
 \qed
\end{proof}

Third, we confirm that $\B$ extracts a valid forgery for $\DS_{\CK}$ when the event $\Bad$ does not occur and $\A$ outputs a valid forgery for $\SAS_{\CK}$.
Let $(L^{*}_{n^{*}}=(\pk^{*}_{1}, \dots, \pk^{*}_{n^{*}}), (m^{*}_{i} = (m^{*}_{i, j})_{j \in [\ell]})_{i \in [n^{*}]}, \allowbreak \sigma^{*}_{n^{*}} = (A^{*}_{n^{*}}, B^{*}_{n^{*}}, C^{*}_{n^{*}}))$ be a valid forgery output by $\A$.
If $\nSASSVerify(L^{*}_{n^{*}}, (m_{1}^{*}, \dots, m^{*}_{n^{*}}), \sigma^{*}_{n^{*}}) = 1$ holds, the following equations hold:
\begin{equation*}
\begin{split}
&A^{*} \neq 1_{\G}; (A^{*}_{n^{*}})^{x_{2} - dx_{1} + \sum_{i \in [n^{*}]} \sum_{j \in [\ell]} m^{*}_{i, j}(y_{i, j, 2} - dy_{i, j, 1})} = C^{*}_{n^{*}}(B^{*}_{n^{*}})^{-d};\\
&B^{*}_{n^{*}} = (A^{*}_{n^{*}})^{x_{1} + \sum_{i \in [n]} \sum_{j \in [\ell]} m^{*}_{i, j}y_{i, j, 1}}.\\
\end{split}
\end{equation*}
This fact implies that the following equations hold:
\begin{equation*}
\begin{split}
&A^{*} \neq 1_{\G}; B^{*}_{n^{*}} = (A^{*}_{n^{*}})^{x_{1} + \sum_{i \in [n]} \sum_{j \in [\ell]} m^{*}_{i, j}y_{i, j, 1}};\\
&C^{*}_{n^{*}} =(A^{*}_{n^{*}})^{x_{2} + \sum_{i \in [n]} \sum_{j \in [\ell]} m^{*}_{i, j}y_{i, j, 2}}.
\end{split}
\end{equation*}

$\hat{\sigma} = (\hat{A}, \hat{B}, \hat{C})$ is computed as 
\begin{equation*}
\begin{split}
&\hat{A} \leftarrow A^{*}_{n^{*}}; \hat{B} \leftarrow B^{*}_{n^{*}}((A^{*}_{n^{*}})^{\sum_{i \in [n^{*}] \backslash \{i^{*}\}} \sum_{j \in [\ell]}m^{*}_{i, j}y_{i, j, 1}})^{-1}\\
&\hat{C} \leftarrow \hat{C}_{n^{*}}((A^{*}_{n^{*}})^{\sum_{i \in [n^{*}] \backslash \{i^{*}\}} \sum_{j \in [\ell]}m^{*}_{i, j}y_{i, j, 2}})^{-1}.
\end{split}
\end{equation*}

From these facts, we see that 
\begin{equation*}
\begin{split}
&\hat{A} = A^{*}_{n^{*}}; \hat{B} = (A^{*}_{n^{*}})^{x_{1} + \sum_{j \in [\ell]} m^{*}_{i^{*}, j}y_{i^{*}, j, 1}}; \hat{C} = (A^{*}_{n^{*}})^{x_{2} + \sum_{j \in [\ell]} m^{*}_{i^{*}, j}y_{i^{*}, j, 2}}.
\end{split}
\end{equation*}
holds where $\sk^{*} = ((y_{i^{*}, j, 1}, y_{i^{*}, j, 2})_{j \in [\ell]})$.
Since $m^{*}$ is not queried to signing, we see that $(\hat{m},\hat{\sigma}) = (m^{*}_{i}, \hat{A}, \hat{B}, \hat{C})$ is a valid forgery for $\DS_{\CK}$.

Finally, we bound the advantage $\Adv^{\DBP_{\G}}_{\BG, \R}(\lambda)$.
From Lemma \ref{Lem_Bad_Does_not_occurs}, we have
\begin{equation*}
\begin{split}
\Adv^{\EUFCMACK}_{\SAS, \A}(\lambda) \allowbreak  &= \Pr[\lnot\Bad] \Pr[\sfGame^{\EUFCMACK}_{\SAS, \A}(1^{\lambda}) \Rightarrow 1 | \lnot\Bad]\\
&~~~~~~ + \Pr[\Bad] \Pr[\sfGame^{\EUFCMACK}_{\SAS, \A}(1^{\lambda}) \Rightarrow 1 | \Bad]\\
& = \Adv^{\DBP_{\G}}_{\BG, \R}(\lambda) + \negl(\lambda).
\end{split}
\end{equation*}

Thus, we conclude Theorem \ref{Th_ASAOursCK_EUF_from_DSOursCK_EUF}.
\qed
\end{proof}
From Lemma \ref{Th_DSCK_EUF_from_SXDH},  Theorem \ref{Th_ASAOursCK_EUF_from_DSOursCK_EUF}, and Lemma \ref{Lem_from_SXDH_to_DBP}, we obtain the following fact.

\begin{corollary}\label{Coro_SXDH_SAS_Ours}
If the SXDH assumption holds, then $\SAS_{\OursCK}$ satisfies EUF-CMA security in the certified key model.
\end{corollary}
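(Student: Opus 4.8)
The plan is to obtain Corollary~\ref{Coro_SXDH_SAS_Ours} as a direct composition of the two reductions already in hand. Theorem~\ref{Th_ASAOursCK_EUF_from_DSOursCK_EUF} reduces the EUF-CMA security of $\SAS_{\OursCK}$ in the certified key model to two separate hypotheses: the EUF-CMA security of the underlying randomizable signature $\DS_{\CK}$, and the hardness of the DBP problem on $\G$. My only task is therefore to supply both hypotheses from the single SXDH assumption and then invoke the theorem.

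First I would discharge both premises. By Lemma~\ref{Th_DSCK_EUF_from_SXDH}, the SXDH assumption already implies that $\DS_{\CK}$ is EUF-CMA secure, which settles the first hypothesis with no further work. For the second, the SXDH assumption asserts hardness of DDH in both source groups and in particular in $\G$; by Lemma~\ref{Lem_from_SXDH_to_DBP} this yields hardness of the DBP problem on $\G$. Thus one and the same SXDH instance grounds both branches.

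With both premises of Theorem~\ref{Th_ASAOursCK_EUF_from_DSOursCK_EUF} in place, the theorem yields the desired security directly. Concretely, for any PPT adversary $\A$ the advantage bound established inside the theorem reads $\Adv^{\EUFCMACK}_{\SAS, \A}(\lambda) = \Adv^{\DBP_{\G}}_{\BG, \R}(\lambda) + \negl(\lambda)$, where the $\negl(\lambda)$ term absorbs the EUF-CMA advantage against $\DS_{\CK}$ achieved through the reduction $\B$; both terms on the right are negligible under SXDH by the two lemmas above, so the left-hand side is negligible and $\SAS_{\OursCK}$ is EUF-CMA secure in the certified key model.

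I do not expect a genuine obstacle at the corollary level, because all the cryptographic substance lives in the proof of Theorem~\ref{Th_ASAOursCK_EUF_from_DSOursCK_EUF} --- in particular the analysis of the event $\Bad$, where a maliciously formed yet valid incoming signature would make $\B$'s simulated signatures mis-distributed, which is bounded via the DBP reduction of Lemma~\ref{Lem_Bad_Does_not_occurs}. The only care needed here is to confirm that a single SXDH assumption simultaneously grounds both the unforgeability branch and the DBP branch, so that the two reductions compose without competing assumptions or circularity; this is immediate, since SXDH implies each premise independently.
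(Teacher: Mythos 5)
Your proposal is correct and follows exactly the paper's own route: the paper obtains Corollary~\ref{Coro_SXDH_SAS_Ours} by combining Lemma~\ref{Th_DSCK_EUF_from_SXDH} (SXDH implies EUF-CMA security of $\DS_{\CK}$), Lemma~\ref{Lem_from_SXDH_to_DBP} (SXDH implies the DBP assumption on $\G$), and Theorem~\ref{Th_ASAOursCK_EUF_from_DSOursCK_EUF}, just as you do. Your additional remarks on the advantage bound and on the non-circularity of deriving both premises from one SXDH assumption are consistent with the paper's (one-line) argument and add no divergence in approach.
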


\section{Ordered Multi-Signatures (OMS)}\label{Sec_OMS}
In this section, we introduce a definition of an ordered multi-signature scheme with public-key aggregation and its security notion.
Then, we propose our ordered multi-signature scheme $\OMS_{\OursCK}$.

\subsection{Ordered Multi-Signature Scheme $\OMS$}

\begin{definition}[Ordered Multi-Signature Scheme]
Let $n_{max}(\lambda) = \poly(\lambda)$ be a polynomial.\footnote{$n_{max}(\lambda)$ represents the maximum number of signers that participate in signing for each signature.}
An ordered multi-signature scheme $\OMS$ consists of the following tuple of algorithms $(\nOMSSetup, \nOMSKGen, \allowbreak \nOMSKVerify, \allowbreak \nOMSKAgg, \nOMSSign, \nOMSSVerify)$.
\begin{itemize}
\item $\nOMSSetup (1^{\lambda}):$ A setup algorithm takes as an input a security parameter $1^{\lambda}$. It returns the public parameter $\pp$.
In this work, we assume that $\pp$ defines a message space and represents this space by $\mathcal{M}$.
We omit a public parameter $\pp$ in the input of all algorithms except for $\nOMSKGen$.

\item $\nOMSKGen (\pp):$ A key generation algorithm takes as an input a public parameter $\pp$. It returns a public key $\pk$ and a secret key $\sk$.

\item $\nOMSKVerify(\pk, \sk):$ A key verification algorithm takes as an input a public key $\pk$ and a secret key $\sk$. 
It returns a bit $b \in  \{0, 1\}$.

\item $\nOMSKAgg(L_{n}=(\pk_{1}, \dots, \pk_{n})):$ A key aggregation algorithm takes as an input a list of public key $L_{n}=(\pk_{1}, \dots, \pk_{n})$. 
It returns an aggregated public key $\apk$ or $\bot$.

\item $\nOMSSign(\sk_{n}, L_{n-1}=(\pk_{1}, \dots, \pk_{n-1}), m, \sigma_{n-1}):$ A signing algorithm takes as an input a secret key $\sk_{n}$, a list of public keys $L_{n-1}=(\pk_{1}, \dots, \pk_{n-1})$, a message $m$, and a signature $\sigma_{n-1}$. 
It returns an updated signature $\sigma_{n}$ or $\bot$.

\item $\nOMSSVerify (\apk, m, \sigma):$ A signature verification algorithm takes as an input an aggregated public key $\apk$, a message $m$, and a signature $\sigma$.
It returns a bit $b \in  \{0, 1\}$.
\end{itemize}
\end{definition}
\paragraph{\bf Correctness.}
$\OMS$ satisfies correctness if $\forall \lambda \in \N$, $\pp \leftarrow \nOMSSetup (1^{\lambda})$, $\forall n  \leq n_{max}(\lambda)$,  $\forall m \in \mathcal{M}$, $(\pk_{i}, \sk_{i}) \leftarrow \nOMSKGen(\pp)$ for $i \in [n]$, $L_{0} = \epsilon$, $\sigma_{0} = \epsilon$, $L_{i}=(\pk_{1}, \dots, \pk_{i})$ for $i \in [n]$, and $\sigma_{i} \leftarrow \nOMSSign(\sk_{i}, L_{i-1}, m, \sigma_{i-1})$ for $i \in [n]$, 
\begin{enumerate}
\item For $i \in [n]$, $\nOMSKVerify(\pk_{i}, \sk_{i}) = 1$
\item For $i \in [n]$, if elements in $L_{i}$ are distinct, $\nOMSSVerify(\nOMSKAgg(L_{i}), m, \sigma_{i}) = 1$
\end{enumerate}
holds.

We review the existentially unforgeable under chosen message attacks (EUF-CMA) security in the certified key model \cite{BGOY07}.
This security ensures that it is hard for any PPT adversary not only to forge a signature for a new message but also to forge a signature that is swapped to the order of honest signers. 
Similar to Definition~\ref{Def_SAS_EUF_CK}, we define the security in the certified model.

\begin{definition}[EUF-CMA Security in CK model]
Let $\OMS$ be an ordered multi-signature scheme and $\A$ be a PPT adversary.
The existentially unforgeable under chosen message attacks (EUF-CMA) security in the certified key model is defined by the following game $\sfGame^{\EUFCMACK}_{\OMS, \A}(1^{\lambda})$ between the challenger $\C$ and an adversary $\A$.

\begin{itemize}
\item {\bf Initial setup:}
$\C$ initializes sets $S^{\Cert} \leftarrow \{\}$, $S^{\Sign} \leftarrow \{\}$, runs $\pp \leftarrow \nOMSSetup (1^{\lambda})$, $(\pk^{*}, \sk^{*}) \leftarrow \nOMSKGen(\pp)$, and sends $(\pp, \pk^{*})$ to $\A$.
\item $\A$ makes queries for the following oracles $\mathcal{O}^{\Cert}$ and $\mathcal{O}^{\Sign}$ polynomially many times.
\begin{itemize}
\item {\bf Key registration query:} 
For a key registration query on $(\pk, \sk)$, $\mathcal{O}^{\Cert}$ checks the validity of $(\pk, \sk)$.
If $\nOMSKVerify(\pk, \sk) = 1$, $\mathcal{O}^{\Cert}$ updates $S^{\Cert} \leftarrow S^{\Cert} \cup \{\pk\}$ and returns $\accept$.
Otherwise, $\mathcal{O}^{\Cert}$ returns $\reject$.

\item {\bf Signing query:}
For a signing query on $(L_{n-1} = (\pk_{1}, \dots, \pk_{n-1}), m, \sigma_{n-1})$, $\mathcal{O}^{\Sign}$ proceeds as follows. 
If $L_{n-1}  \neq  \epsilon$ (i.e., $n-1 \geq 1$), $\mathcal{O}^{\Sign}$ checks that the following conditions:
\begin{enumerate}
\item $\pk^{*} \notin L_{n-1} ;$
\item $\pk_{i} \in Q^{\Cert}$ for $i \in [n-1] ;$
\item $\nOMSSVerify(\nOMSKAgg(L_{n-1}), m, \sigma_{n-1}) = 1$.
\end{enumerate}
If at least one of the above conditions does not hold, $\mathcal{O}^{\Sign}$ returns $\bot$.
Then $\mathcal{O}^{\Sign}$ runs $\sigma_{n} \leftarrow \nOMSSign(\sk^{*}, L_{n-1}, m, \sigma_{n-1})$ and updates $S^{\Sign} \leftarrow S^{\Sign} \cup \{(m, n)\}$, and returns $\sigma_{n}$ to $\A$.
\end{itemize}
\item {\bf End of the game:}
$\A$ finally outputs a forgery $(L^{*}_{n^{*}}=(\pk^{*}_{1}, \dots, \pk^{*}_{n^{*}}), m^{*}, \sigma^{*}_{n^{*}})$ to $\C$.\\
If the following conditions hold, return $1$.
\begin{enumerate}
\item $n^{*} \leq n_{max} \land \nOMSSVerify(\nOMSKAgg(L^{*}_{n^{*}}), m^{*}, \sigma^{*}_{n^{*}}) = 1;$
\item There exists $i^{*} \in [n^{*}]$ such that $\pk^{*}_{i^{*}} = \pk^{*} \land (m^{*}, i^{*}) \notin S^{\Sign};$
\item $\forall j \in [n^{*}]$ such that $\pk_{j} \neq \pk^{*}$, $\pk_{j} \in Q^{\Cert}.$
\end{enumerate}
\end{itemize}

The advantage of an adversary $\A$ for the game is defined by $\Adv^{\EUFCMACK}_{\OMS, \A}(\lambda) \allowbreak:=$ $\allowbreak \Pr[\sfGame^{\EUFCMACK}_{\OMS, \A}(1^{\lambda}) \Rightarrow 1]$.
$\OMS$ satisfies the EUF-CMA security in the CK model if for any PPT adversary $\A$, $\Adv^{\EUFCMACK}_{\OMS, \A}(\lambda)$ is negligible in $\lambda$.
\end{definition}

\subsection{Derivation of $\OMS_{\OursCK}$ from $\SAS_{\OursCK}$ with $\ell=2$}\label{Subsec_OMS_Over_Ours_View}
We explain how to obtain $\OMS_{\OursCK}$ from $\SAS_{\OursCK}$ with $\ell = 2$.
First, we apply the transformation by Boldyreva et al. \cite{BGOY07} to $\SAS_{\OursCK}$ $\ell = 2$.
This transformation forces the signer $i$ to sign the message $(m_{i, 1}, m_{i, 2}) = (m, i)$ where $m$ is a common message for all signers.
The verification of a signature $\sigma_{n}  = (A, B, C)$ is done by checking a pairing equation $e(A, \widetilde{U} \prod_{i \in [n]} (\widetilde{V}_{i, 1})^{m} \widetilde{V}^{i}_{i, 2}) \cdot e(B, \widetilde{D}) = e(C, \widetilde{H})$ where $\pk_{i} = (\widetilde{V}_{i, 1}, \widetilde{V}_{i, 2})$.
Then, we modify this derived scheme to support public-key aggregation.
We compress the public key list $L_{n}$ into an aggregated public key $\apk = (\widetilde{\K}_{1}, \widetilde{\K}_{2}) = (\prod_{i \in [n]}\widetilde{V}_{i, 1}, \prod_{i \in [n]}\widetilde{V}^{i}_{i, 2})$.
We change the pairing equation in the verification as  $e(A, \widetilde{U} \prod_{i \in [n]} \widetilde{\K}_{1}^{m} \widetilde{\K}_{2}) \cdot e(B, \widetilde{D}) = e(C, \widetilde{H})$.
Thus, we obtain our scheme $\OMS_{\OursCK}$.

We recall that the ordered multi-signature scheme with the public key aggregation property is not obtained by applying Boldyreva et al. transformation to $\SAS_{\OursCK}$ of $\ell=1$ in Section \ref{Subsec_Moti_Pro}.
The important difference derived ordered multi-signature schemes between $\SAS_{\OursCK}$ of $\ell=1$ and $\ell=2$ is the components $\widetilde{U} \prod_{i \in [n]}\widetilde{V}_{i}^{m||i}$ and $\widetilde{U} \prod_{i \in [n]} (\widetilde{V}_{i, 1})^{m} \widetilde{V}^{i}_{i, 2}$ of the pairing equations that appeared in signature verification. 
Thanks to the algebraic structure $\prod_{i \in [n]} (\widetilde{V}_{i, 1})^{m} \widetilde{V}^{i}_{i, 2}$ in the case of $\ell=2$, we obtain the public key aggregation property from $\SAS_{\OursCK}$ with $\ell = 2$.

\subsection{Schemes $\OMS_{\OursCK}$}\label{Subsec_OMS_Ours_CK}
Let $n_{max}(\lambda) = \poly(\lambda)$ such that $n_{max}(\lambda) < 2^{\lambda-1}-1$.
Our ordered multi-signature scheme $\OMS_{\OursCK}$ with public-key aggregation is given in Fig. \ref{OMS_OursCK_const}.

\begin{figure}[h]
\centering
\begin{tabular}{|l|}
\hline
$\nOMSSetup(1^{\lambda}):$\\ 
~~~$\BGcal_{3}= (p, \G, \widetilde{\G}, \G_T, e) \leftarrow \BG(1^\lambda)$, $G \xleftarrow{\$} \G^*$, $\widetilde{G} \xleftarrow{\$} \widetilde{\G}^*$, $d, x_{1},  x_{2}  \xleftarrow{\$} \mathbb{Z}^{*}_{p}$,\\
~~~$X_{1} \leftarrow G^{x_{1}}$, $X_{2} \leftarrow G^{x_{2}}$, $\widetilde{H} \xleftarrow{\$} \widetilde{\G}$, $\widetilde{D} \leftarrow \widetilde{H}^{d}$, $\widetilde{U} \leftarrow \widetilde{H}^{x_{2} - dx_{1}}$.\\
~~~Return $\pp \leftarrow (\BGcal_{3}, G_{1}, G_{2}, X_{1}, X_{2}, \widetilde{H}, \widetilde{D}, \widetilde{U})$.\\

$\nOMSKGen(\pp):$\\
~~~$(y_{j,1},  y_{j,2})_{j \in [2]}  \xleftarrow{\$} (\mathbb{Z}^{*}_{p})^{4}$, $(\widetilde{V}_{j} \leftarrow \widetilde{H}^{y_{j, 2}} ( \widetilde{D}^{y_{j, 1}})^{-1})_{j \in [2]} , (\widetilde{Y}_{j, 1} \leftarrow \widetilde{G}^{y_{j,1}})_{j \in [2]}$.\\
~~~Return $(\pk, \sk) \leftarrow ((\widetilde{V}_{1}, \widetilde{V}_{2}), ((y_{j,1},  y_{j,2})_{j \in [2]}))$.\\

$\nOMSKVerify(\pk = (\widetilde{V}_{1}, \widetilde{V}_{2}), \sk = ((y_{j,1},  y_{j,2})_{j \in [2]})):$\\
~~~If $\widetilde{V}_{j} = \widetilde{H}^{y_{j, 2}} (\widetilde{D}^{y_{j, 1}})^{-1}$ for $j \in [2]$, return $1$.\\
~~~Otherwise return $0$.\\

$\nOMSKAgg(L_{n}=(\pk_{i} = (\widetilde{V}_{i, 1}, \widetilde{V}_{i, 2}))_{i \in [n]}):$\\
~~~$|n| > n_{max}(\lambda)$, return $\bot$.\\
~~~If there exists $(j, j')$ such that $j \neq j' \land \pk_{j} = \pk_{j'}$, return $\bot$.\\
~~~$\widetilde{\K}_{1} \leftarrow \prod_{i \in [n]}\widetilde{V}_{i, 1} , \widetilde{\K}_{2}  \leftarrow \prod_{i \in [n]}\widetilde{V}^{i}_{i, 2}$.\\
~~~Return $\apk = (\widetilde{\K}_{1}, \widetilde{\K}_{2})$.\\

$\nOMSSign(\sk_{n}= ((y_{n, j, 1},  y_{n, j, 2})_{j \in [2]}), L_{n-1}, m, \sigma_{n-1}= (A_{n-1}, B_{n-1}, C_{n-1})):$\\
~~~If $m = 0$, return $\bot$.\\
~~~If $L_{n-1} = \epsilon$ (i.e., $n-1=0$), $\sigma_{0} = (A_{0}, B_{0}, C_{0}) \leftarrow(G, X_{1}, X_{2})$.\\
~~~If $L_{n-1} \neq \epsilon$, if $\nSASSVerify(\nOMSKAgg(L_{n-1}), m, \sigma_{n-1}) = 0$, return $\bot$.\\
~~~If $L_{n-1} \neq \epsilon$, if there exists $(j, j')$ such that $j \neq j' \land \pk_{j} = \pk_{j'}$, return $\bot$. \\
~~~$r_{n} \xleftarrow{\$} \mathbb{Z}^{*}_{p}$, $A_{n} \leftarrow A^{r_{n}}_{n-1}$,\\
~~~$B_{n} \leftarrow (B_{n-1} A^{my_{n, 1, 1} + n y_{n, 2,1}}_{n-1})^{r_{n}}$, $C_{n} \leftarrow (C_{n-1} A^{my_{n, 1, 2} + n y_{n, 2,2}}_{n-1})^{r_{n}}$.\\
~~~Return $\sigma_{n} \leftarrow (A_{n}, B_{n}, C_{n})$.\\

$\nOMSSVerify(\apk = (\widetilde{\K}_{1}, \widetilde{\K}_{2}), m, \sigma=(A, B, C)):$\\
~~~If $m \neq 0  \land A \neq 1_{\G} \land e(A, \widetilde{U} \widetilde{\K}_{1}^{m} \widetilde{\K}_{2}) \cdot e(B, \widetilde{D}) = e(C, \widetilde{H})$, return $1$.\\
~~~Otherwise return $0$.\\
\hline
\end{tabular}
\caption{\small The ordered multi-signature scheme $\OMS_{\OursCK}$.}
\label{OMS_OursCK_const}
\end{figure}

\begin{lemma}\label{Th_OMS_SXDH}
If $\SAS_{\OursCK}$ satisfies the EUF-CMA security in the certified key model, then $\OMS_{\OursCK}$ with $\ell=2$ satisfies the EUF-CMA security in the certified key model.
\end{lemma}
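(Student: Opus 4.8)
The plan is to give a tight, essentially syntactic reduction $\B$ that turns any PPT adversary $\A$ against the EUF-CMA security (in the certified key model) of $\OMS_{\OursCK}$ into an adversary against the EUF-CMA security (in the certified key model) of $\SAS_{\OursCK}$ with $\ell=2$. The guiding observation is that $\OMS_{\OursCK}$ is nothing but $\SAS_{\OursCK}$ with $\ell=2$ in which every signer $i$ is forced to sign the two-coordinate message $(m,i)$ (a common payload $m$ together with its position $i$), presented through the compressed key $\apk=\nOMSKAgg(L_n)=(\widetilde{\K}_1,\widetilde{\K}_2)=(\prod_{i\in[n]}\widetilde{V}_{i,1},\prod_{i\in[n]}\widetilde{V}_{i,2}^{i})$. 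The algebraic point is that
\[
\widetilde{U}\,\widetilde{\K}_1^{m}\,\widetilde{\K}_2=\widetilde{U}\prod_{i\in[n]}\bigl(\widetilde{V}_{i,1}^{m}\widetilde{V}_{i,2}^{i}\bigr),
\]
so the pairing check of $\nOMSSVerify(\apk,m,\sigma)$ coincides with the pairing check of $\nSASSVerify(L_n,((m,i))_{i\in[n]},\sigma)$.

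First I would isolate this as the core equivalence: $\nOMSSVerify(\nOMSKAgg(L_n),m,\sigma)=1$ iff $\nSASSVerify(L_n,((m,i))_{i\in[n]},\sigma)=1$. The pairing equations agree by the display above, and $\nOMSKAgg(L_n)\neq\bot$ forces $n\le n_{max}$ and the $\pk_i$ distinct, while $\nOMSSVerify$ already supplies $m\neq0$ and $A\neq1_{\G}$. The only additional predicate in the $\SAS$ verifier at $\ell=2$ is $m_{i,1}\neq0$ and $m_{i,2}\neq0$ for all $i$: here $m_{i,1}=m\neq0$ is immediate, and $m_{i,2}=i$ must be a nonzero residue modulo $p$. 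This is exactly where $n_{max}(\lambda)<2^{\lambda-1}-1$ enters, guaranteeing that the map $i\mapsto i\bmod p$ is injective and zero-avoiding on $[n_{max}]$; I expect this to be the one genuinely content-bearing step, since it both validates the nonzero condition and underpins the sign-log correspondence below.

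Next I would specify $\B$. On input $(\pp,\pk^{*})$ from the $\SAS_{\OursCK}$ challenger, $\B$ forwards $(\pp,\pk^{*})$ verbatim to $\A$; this is sound because setup, key generation, and key verification of the two schemes at $\ell=2$ are identical, so in particular $\nOMSKVerify=\nSASKVerify$ and key registration queries are simply relayed (keeping $S^{\Cert}$ in sync). For an $\OMS$ signing query $(L_{n-1},m,\sigma_{n-1})$ (the honest signer occupying position $n$), $\B$ first performs the publicly computable $\OMS$-oracle checks ($\pk^{*}\notin L_{n-1}$, certification of the $\pk_i$, and verification of $\sigma_{n-1}$, the last via the equivalence), and then issues the $\SAS$ signing query $(L_{n-1},((m,i))_{i\in[n-1]},(m,n),\sigma_{n-1})$, returning the answer to $\A$. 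Because $\nSASSign$ with target message $(m,n)$ updates $(A,B,C)$ using precisely the exponent $m\,y_{n,1,k}+n\,y_{n,2,k}$ that $\nOMSSign$ uses, the simulated signature is perfectly distributed; moreover the entry $(m,n)$ placed into the $\SAS$ game's $S^{\Sign}$ is exactly the pair $(m,n)$ that the $\OMS$ game records, and by the injectivity of $i\mapsto i\bmod p$ the two sign-logs are in bijection on the relevant set.

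Finally I would map the forgery: when $\A$ outputs $(L^{*}_{n^{*}},m^{*},\sigma^{*})$, $\B$ outputs $(L^{*}_{n^{*}},((m^{*},i))_{i\in[n^{*}]},\sigma^{*})$, and I check that the three winning conditions transfer. Condition~1 follows from the verification equivalence together with $n^{*}\le n_{max}$ (so $\nOMSKAgg(L^{*}_{n^{*}})\neq\bot$). Condition~2 transfers because $(m^{*},i^{*})\notin S^{\Sign}$ in the $\OMS$ game is, by the sign-log bijection, exactly the statement that the $\SAS$ message $(m^{*},i^{*})$ at the honest index $i^{*}$ was never queried; here the no-collision-mod-$p$ property is essential, since otherwise an honest signature issued at one position could coincide with a distinct forging position as a field element. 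Condition~3 is verbatim identical in both games. Hence $\B$ wins whenever $\A$ wins, giving $\Adv^{\EUFCMACK}_{\OMS_{\OursCK},\A}(\lambda)\le\Adv^{\EUFCMACK}_{\SAS_{\OursCK},\B}(\lambda)$, which is negligible by hypothesis; thus the main difficulty is not the reduction mechanics, which are bookkeeping, but the careful argument that encoding positions as field elements introduces neither zero coordinates nor collisions, precisely the role of the bound on $n_{max}$.
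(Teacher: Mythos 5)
Your proposal is correct and follows essentially the same route as the paper's own proof: a direct black-box reduction that forwards $(\pp, \pk^{*})$ unchanged, relays key registration queries, translates each $\OMS$ signing query into a $\SAS_{\OursCK}$ ($\ell=2$) query on the position-encoded messages $((m,i))_{i}$, and maps the forgery $(L^{*}_{n^{*}}, m^{*}, \sigma^{*})$ to $(L^{*}_{n^{*}}, ((m^{*},i))_{i\in[n^{*}]}, \sigma^{*})$. The only difference is one of care rather than substance: you make explicit the verification equivalence between $\nOMSSVerify \circ \nOMSKAgg$ and $\nSASSVerify$ and the role of the bound $n_{max}(\lambda) < 2^{\lambda-1}-1$ in keeping position encodings nonzero and collision-free modulo $p$, points the paper's proof passes over with ``Clearly, $\B$ simulates the EUF-CMA game.''
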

We give the proof of Theorem \ref{Th_OMS_SXDH} in Appendix \ref{Subsec_Sec_OMS_Analysis}.
By combining Theorem \ref{Th_OMS_SXDH} with Corollary \ref{Coro_SXDH_SAS_Ours}, we obtain the following fact.

\begin{corollary}
If the SXDH assumption holds, then $\OMS_{\OursCK}$ satisfies EUF-CMA security in the certified key model.
\end{corollary}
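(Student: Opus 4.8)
The plan is to reduce the EUF-CMA security of $\OMS_{\OursCK}$ to that of $\SAS_{\OursCK}$ with $\ell=2$, exploiting the fact that the Boldyreva et al.\ encoding makes the two schemes differ only in how messages are presented: in $\OMS_{\OursCK}$ the signer at position $i$ signs the common message $m$, whereas the corresponding $\SAS_{\OursCK}$ signer signs the vector $(m,i)$. The public parameters, key pairs, and signature triples $(A,B,C)$ have identical structure in both schemes (the auxiliary elements $\widetilde{Y}_{j,1}$ computed in $\nOMSKGen$ are never used), and the aggregation identity $\widetilde{\K}_1^{m}\widetilde{\K}_2 = \prod_{i\in[n]}\widetilde{V}_{i,1}^{m}\widetilde{V}_{i,2}^{i}$ shows that the $\nOMSSVerify$ equation on $\apk=\nOMSKAgg(L_n)$ is literally the $\nSASSVerify$ equation on the message list $((m,1),\dots,(m,n))$. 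Thus a forgery against $\OMS_{\OursCK}$ is, after position-encoding, a forgery against $\SAS_{\OursCK}$.

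Given a PPT adversary $\A$ against $\OMS_{\OursCK}$, I would build a reduction $\B$ against $\SAS_{\OursCK}$ as follows. On receiving $(\pp,\pk^{*})$ from the $\SAS$ challenger, $\B$ forwards it unchanged to $\A$, since the formats coincide. Key-registration queries are relayed verbatim, because $\nOMSKVerify$ and $\nSASKVerify$ perform the same check. For a signing query $(L_{n-1},m,\sigma_{n-1})$, $\B$ first runs exactly the guard conditions of the $\OMS$ signing oracle ($\pk^{*}\notin L_{n-1}$, all $\pk_i$ certified, and $\nOMSSVerify(\nOMSKAgg(L_{n-1}),m,\sigma_{n-1})=1$); if they pass, it issues the $\SAS$ signing query with message list $((m,1),\dots,(m,n-1))$, target message $(m,n)$, and the same $\sigma_{n-1}$, then returns the resulting $\sigma_n$ to $\A$. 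Because the two verification equations coincide under the encoding, the $\SAS$ oracle's own validity check succeeds precisely when the $\OMS$ check does, and the returned signature is distributed identically to an honest $\nOMSSign$ output. When $\A$ halts with a forgery $(L^{*}_{n^{*}},m^{*},\sigma^{*}_{n^{*}})$, $\B$ outputs $(L^{*}_{n^{*}},((m^{*},1),\dots,(m^{*},n^{*})),\sigma^{*}_{n^{*}})$.

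It then remains to check that the three winning conditions transfer. Condition~1 (verification) is immediate from the $\widetilde{\K}_1^{m}\widetilde{\K}_2$ identity, and condition~3 (all non-target keys certified) is preserved because $\B$'s certified set coincides with the one it presents to $\A$. The crux is condition~2: I must argue that the $\SAS$ freshness requirement $m^{*}_{i^{*}}=(m^{*},i^{*})\notin S^{\Sign}$ follows from the $\OMS$ requirement $(m^{*},i^{*})\notin S^{\Sign}$. This holds by an exact bookkeeping correspondence: each time $\B$ answers an $\OMS$ query for common message $m$ at length $n$, the $\SAS$ set records the vector $(m,n)$, which is precisely the pair the $\OMS$ game records, so the two $S^{\Sign}$ sets are identical under the identification $(m,n)\leftrightarrow(m,n)$. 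I also need well-formedness of the encoded messages, namely that every position $i\in[n^{*}]$ is a valid second component with $i\not\equiv 0\pmod p$; this is guaranteed by $n^{*}\le n_{max}(\lambda)<2^{\lambda-1}-1<p$.

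The main obstacle is not algebraic but definitional: making the $S^{\Sign}$ correspondence airtight, so that no $\OMS$-fresh forgery is mapped to a $\SAS$ message the oracle already signed (and conversely that every legitimate $\OMS$ signing query can be answered honestly through the $\SAS$ oracle without $\B$ ever needing $\sk^{*}$). Once this is settled, $\B$ perfectly simulates the $\OMS$ game and wins whenever $\A$ does, giving $\Adv^{\EUFCMACK}_{\OMS_{\OursCK},\A}(\lambda)=\Adv^{\EUFCMACK}_{\SAS_{\OursCK},\B}(\lambda)$, which is negligible by hypothesis; this yields the lemma.
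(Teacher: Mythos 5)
Your proposal is correct and is essentially the paper's own argument: the paper proves this corollary by combining its Lemma~\ref{Th_OMS_SXDH} (whose appendix proof is exactly your reduction, forwarding $(\pp,\pk^{*})$ unchanged, relaying key-registration queries, translating each signing query and the final forgery via the position-encoding $m\mapsto((m,1),\dots,(m,n))$) with Corollary~\ref{Coro_SXDH_SAS_Ours}, which is what your closing appeal to ``negligible by hypothesis'' implicitly invokes to get from the SXDH assumption to the security of $\SAS_{\OursCK}$ with $\ell=2$. Your treatment of the $S^{\Sign}$ bookkeeping and of the non-vanishing position components ($i\not\equiv 0 \pmod p$ since $n_{max}(\lambda)<2^{\lambda-1}-1<p$) is in fact slightly more careful than the paper's terse ``Clearly, $\B$ simulates'' claim.
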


\appendix

\section{Bilinear Groups}\label{Appen_BG}

A pairing group is a tuple $\BGcal = (p, \G, \widetilde{\G}, \G_T, e)$ where $\G$, $\widetilde{\G}$ and $\G_T$ are cyclic group of prime order $p$ and $e:\G \times \widetilde{\G} \rightarrow \G_T$ is an efficient computable bilinear map that satisfies the followings.
\begin{enumerate}
\item $X \in \G$, $\widetilde{Y} \in \widetilde{\G}$ and $a, b \in \Z_{p}$, $e(X^a, \widetilde{Y}^b) = e(X, \widetilde{Y})^{ab}$.
\item $G \in \G^{*}$, $\widetilde{G} \in \widetilde{\G}^{*}$, $e(G, \widetilde{G}) \neq 1_{\G_T}$.
\end{enumerate}
In this work, we use the type $3$ pairing group: $\G \neq \widetilde{\G}$ and there is no efficiently computable isomorphism $\psi:\widetilde{\G} \rightarrow \G$.

We introduce a bilinear group generator $\BG$ which takes as an input a security parameter $1^{\lambda}$ and returns the descriptions of an asymmetric pairing $\BGcal = (p, \G, \widetilde{\G}, \G_{T}, e)$ where $p$ is a $\lambda$-bits prime.
We represent a description of type $i$ pairing group as $\BGcal_{i}$.

\begin{assumption}[DDH Assumption on $\G$]\label{Assum_DDH_Gi}
Let $\BG$ be a bilinear group generator and $\A$ be a PPT algorithm.
The decisional Diffie-Hellman (DDH) assumption on $\G$ holds for $\BG$ if for any PPT adversary $\A$, the following advantage 

\begin{equation*}
\begin{split}
&\Adv^{\DDH_{\G}}_{\BG, \A}(\lambda) := \left|\Pr[1 \leftarrow \A(\BGcal_{3}, G, \widetilde{G}, S, T, Z_{1})] - \Pr[1 \leftarrow \A(\BGcal_{3}, G, \widetilde{G}, S, T, Z_{0})] 
\right|
\end{split} 
\end{equation*}
is negligible in $\lambda$ where $\BGcal_{3} \leftarrow \BG(1^{\lambda}), s, t, z \xleftarrow{\$} \Z_{p}, G \xleftarrow{\$} \G^{*}, \widetilde{G} \xleftarrow{\$} \widetilde{\G}^{*}, S \leftarrow G^{s},  T \leftarrow G^{t}, Z_{b} \leftarrow G^{st + bz}$.
\end{assumption}

The dual of the above assumption is the DDH assumption on $\widetilde{\G}$ for $\BG$, which is defined by changing from $(S, T, Z_{b})$ to $(\widetilde{S}, \widetilde{T}, \widetilde{Z_{b}})$ in Definition \ref{Assum_DDH_Gi} where $ \widetilde{S} \leftarrow  \widetilde{G}^{s},   \widetilde{T} \leftarrow  \widetilde{G}^{t},  \widetilde{Z}_{b} \leftarrow  \widetilde{G}^{st + bz}$.

\begin{assumption}[SXDH Assumption \cite{BGMM05}]\label{Assum_SXDH}
Let $\BG$ be a bilinear group generator.
The symmetric external Diffie-Hellman (SXDH)  assumption holds for $\BG$ if the DDH assumption holds both $\G$ and $\widetilde{\G}$.
\end{assumption}

\begin{assumption}[DBP Assumption \cite{Gro09}]\label{Assum_SXDH}
Let $\BG$ be a bilinear group generator and $\A$ be a PPT algorithm.
The double pairing (DBP) assumption on $\widetilde{\G}$ holds if for any PPT adversary $\A$, the following advantage 
\begin{equation*}
\begin{split}
&\Adv^{\DBP_{\G}}_{\BG, \A}(\lambda)   \\
&~~~~:= \left|
\Pr\left[
\begin{split}
&e(E, \widetilde{H}) \cdot e(F, \widetilde{D}) = 1_{\G_{T}} \\
&\land (E, F) \neq (1_{\G}, 1_{\G})
\end{split}  
\middle|
\begin{split}
&\BGcal_{3} \leftarrow \BG(1^{\lambda}), G \xleftarrow{\$} \G^{*}, \widetilde{G} \xleftarrow{\$} \widetilde{\G}^{*}, \\
&\widetilde{H}, \widetilde{D} \xleftarrow{\$} \widetilde{\G}^{*}, (E, F) \leftarrow \A(\BGcal_{3}, G, \widetilde{G}, \widetilde{H}, \widetilde{D})
\end{split}  
\right]
\right|
\end{split} 
\end{equation*}
is negligible in $\lambda$.
\end{assumption}

The dual of the above assumption is the DBP assumption on $\widetilde{\G}$ for $\BG$, which is defined by changing from $(\widetilde{H}, \widetilde{D}, E, F)$ to $(H, D, \widetilde{E}, \widetilde{F})$ in Definition \ref{Assum_DDH_Gi} where $H, D \in \G$ and $\widetilde{E}, \widetilde{F} \in \widetilde{\G}$.

\begin{lemma}[\cite{AFGHO16}]\label{Lem_from_SXDH_to_DBP}
If the SXDH assumption holds, the DBP assumptions on $\G$ and on $\widetilde{\G}$ hold.
\end{lemma}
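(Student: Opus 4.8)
The plan is to prove the two $\DBP$ statements separately via two symmetric reductions, each converting a $\DBP$ solver into a $\DDH$ distinguisher. Since the $\SXDH$ assumption is exactly the conjunction of the $\DDH$ assumptions on $\G$ and on $\widetilde{\G}$, it suffices to show that the $\DDH$ assumption on $\widetilde{\G}$ implies the $\DBP$ assumption on $\G$, and then to invoke the dual argument (swapping the roles of $\G$ and $\widetilde{\G}$) to conclude that the $\DDH$ assumption on $\G$ implies the $\DBP$ assumption on $\widetilde{\G}$. I would spell out only the first implication; the second is identical after exchanging the two source groups.

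For the first implication, let $\A$ be a $\DBP$ solver on $\G$ with advantage $\epsilon$, and I would build a $\DDH$ distinguisher $\R$ on $\widetilde{\G}$. On input a challenge $(\BGcal_{3}, G, \widetilde{G}, \widetilde{S}, \widetilde{T}, \widetilde{Z})$ with $\widetilde{S} = \widetilde{G}^{s}$, $\widetilde{T} = \widetilde{G}^{t}$, and $\widetilde{Z} = \widetilde{G}^{st + bz}$, the reduction sets the $\DBP$ instance to $(\widetilde{H}, \widetilde{D}) := (\widetilde{S}, \widetilde{Z})$, runs $(E, F) \leftarrow \A(\BGcal_{3}, G, \widetilde{G}, \widetilde{H}, \widetilde{D})$, and outputs the guess ``$b = 0$'' exactly when $(E, F) \neq (1_{\G}, 1_{\G})$ and the auxiliary check $e(E, \widetilde{G}) \cdot e(F, \widetilde{T}) = 1_{\G_{T}}$ holds; otherwise it outputs ``$b = 1$''. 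The crucial design point is that $\widetilde{T}$ is withheld from $\A$, so the scalar $t$ stays information-theoretically hidden from the solver.

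I would then verify two things. First, the simulated $\DBP$ instance is correctly distributed in both branches: when $b = 1$ the pair $(\widetilde{S}, \widetilde{Z})$ is uniform in $(\widetilde{\G}^{*})^{2}$ by construction, and when $b = 0$ we have $\widetilde{Z} = \widetilde{S}^{t}$, which is again uniform given $\widetilde{S}$ because $\widetilde{S}$ is a generator with overwhelming probability and $t$ is uniform; hence $\A$ succeeds with probability $\epsilon$, up to negligible loss, in either branch. Second, I would analyze the auxiliary check on the solver's output by writing $E = G^{e_{0}}$ and $F = G^{f_{0}}$: any valid $\DBP$ solution satisfies $e_{0} s + f_{0}(st + bz) \equiv 0 \pmod{p}$. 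In the $b = 0$ branch this factors as $s(e_{0} + f_{0} t) \equiv 0$, and since $s \neq 0$ it forces $e_{0} + f_{0} t \equiv 0$, which is exactly what the auxiliary check tests, so the check passes whenever $\A$ succeeds. In the $b = 1$ branch $t$ is independent of $\A$'s view, so for any output with $f_{0} \neq 0$ the relation $e_{0} + f_{0} t \equiv 0$ holds with probability only $1/p$ over $t$, while $f_{0} = 0$ combined with the nontriviality of $(E, F)$ is impossible; hence $\R$ outputs ``$b = 0$'' with probability at most $1/p$. Combining the bounds yields $\Adv^{\DDH_{\widetilde{\G}}}_{\BG, \R}(\lambda) \geq \epsilon - \negl(\lambda)$, so $\epsilon$ is negligible under $\DDH$ on $\widetilde{\G}$.

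The main obstacle is the distributional argument in the $b = 0$ branch: one must confirm that embedding the product $st$ into $\widetilde{D}$ still yields the uniform marginal demanded by the $\DBP$ game, so that the solver's success probability is preserved and the reduction stays tight; this is precisely what lets the auxiliary linear test separate the two $\DDH$ branches cleanly. Once the single implication is established, the dual reduction together with the definition of $\SXDH$ immediately gives both the $\DBP$ assumption on $\G$ and the $\DBP$ assumption on $\widetilde{\G}$, completing the proof.
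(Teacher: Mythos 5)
Your proof is correct, and it is essentially the standard argument behind the result the paper simply cites from AFGHO16 without proof: embed the DDH challenge on $\widetilde{\G}$ into the DBP instance as $(\widetilde{H},\widetilde{D})=(\widetilde{S},\widetilde{Z})$, withhold $\widetilde{T}$, and use the solver's nontrivial orthogonal pair in the auxiliary pairing test $e(E,\widetilde{G})\cdot e(F,\widetilde{T})=1_{\G_{T}}$, which passes always in the real branch (since $s\neq 0$ forces $e_{0}+f_{0}t\equiv 0$) and with probability at most $1/p$ in the random branch because $t$ is information-theoretically hidden, with the dual reduction handling DBP on $\widetilde{\G}$. One wording nit: in the $b=1$ branch, $f_{0}=0$ together with nontriviality of $(E,F)$ is not itself impossible (e.g.\ $F=1_{\G}$, $E\neq 1_{\G}$); what is impossible is that such an output passes the auxiliary check, since the check with $f_{0}=0$ forces $e_{0}=0$ --- so your $1/p$ bound and the conclusion $\Adv^{\DDH_{\widetilde{\G}}}_{\BG,\R}(\lambda)\geq \epsilon-\negl(\lambda)$ stand.
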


\section{Digital Signatures (DS)}\label{Appen_DS_Def}
We review the definition of a digital signature scheme and its security.

\begin{definition}[Digital Signature Scheme]
A digital signature scheme $\DS$ consists of the following tuple of algorithms $(\nDSSetup, \nDSKGen, \allowbreak \nDSSign, \nDSVerify)$.
\begin{itemize}
\item $\nDSSetup (1^{\lambda}):$ A setup algorithm takes as an input a security parameter $1^{\lambda}$. It returns the public parameter $\pp$.
We assume that $\pp$ defines a message space and represents this space by $\mathcal{M}$.
We omit a public parameter $\pp$ in the input of all algorithms except for $\nDSKGen$.

\item $\nDSKGen (\pp):$ A key generation algorithm takes as an input a public parameter $\pp$. It returns a public key $\pk$ and a secret key $\sk$.

\item $\nDSSign (\sk, m):$ A signing algorithm takes as an input a secret key $\sk$ and a message $m$. It returns a signature $\sigma$.

\item $\nDSVerify (\pk, m, \sigma):$ A verification algorithm takes as an input a public key $\pk$, a message $m$, and a signature $\sigma$.
It returns a bit $b \in  \{0, 1\}$.
\end{itemize}
\end{definition}
\paragraph{\bf Correctness.}
$\DS$ satisfies correctness if $\forall \lambda \in \N$, $\pp \leftarrow \nDSSetup (1^{\lambda})$, $\forall m \in \mathcal{M}_{\pp}$, $(\pk, \sk) \leftarrow \nDSKGen(\pp)$, and $\sigma \leftarrow \nDSSign(\sk, m)$, $\nDSVerify(\pk, m, \sigma) = 1$ holds.

\begin{definition}[EUF-CMA Security]
Let $\DS$ be a digital signature scheme and $\A$ be a PPT adversary.
The existentially unforgeable under chosen message attacks (EUF-CMA) security is defined by the following EUF-CMA game $\sfGame^{\EUFCMA}_{\DS, \A}(1^{\lambda})$ between the challenger $\C$ and an adversary $\A$.

\begin{itemize}
\item {\bf Initial setup:}
$\C$ initializes a set $S^{\Sign} \leftarrow \{\}$, runs $\pp \leftarrow \nDSSetup (1^{\lambda})$, $(\pk^{*}, \sk^{*}) \leftarrow \nDSKGen(\pp)$, and sends $(\pp, \pk^{*})$ to $\A$.
\item $\A$ makes signing queries polynomially many times.
\begin{itemize}
\item  {\bf Signing query:}
For an signing query on $m$, $\C$ updates $S^{\Sign} \leftarrow S \cup \{m\}$, runs $\sigma \leftarrow \nDSSign (\sk^{*}, m)$, and returns $\sigma$ to $\A$.
\end{itemize}
{\bf End of the game:}  
$\A$ finally outputs a forgery $(m^{*}, \sigma^{*})$ to $\C$.\\
If $m^{*} \notin S^{\Sign} \land \nDSVerify (\pk^{*}, m^{*}, \sigma^{*}) = 1$, return $1$. Otherwise, return $0$.
\end{itemize}

The advantage of an adversary $\A$ for the game is defined by $\Adv^{\EUFCMA}_{\DS, \A}(\lambda):= \Pr[\sfGame^{\EUFCMA}_{\DS, \A}(1^{\lambda}) \allowbreak \Rightarrow 1]$.
$\DS$ satisfies the EUF-CMA security if for any PPT adversary $\A$, $\Adv^{\EUFCMA}_{\DS, \A}(\lambda)$ is negligible in $\lambda$.
\end{definition}

\section{Proof of Theorem \ref{Th_OMS_SXDH}}\label{Subsec_Sec_OMS_Analysis}
We give a security proof for Theorem \ref{Th_OMS_SXDH}.

\begin{proof}
Let $\A$ be a PPT adversary for the EUF-CMA security of $\OMS_{\OursCK}$ in the certified key model.
We give a reduction algorithm $\B$ for the EUF-CMA security of $\SAS_{\OursCK}$ with $\ell = 2$ as follows.
\begin{itemize}
\item {\bf Initial setup:}
$\B$ takes as an input an instance $(\pp, \pk^{*}) = ((\BGcal_{3}, G, \widetilde{G}, X_{1}, \allowbreak X_{2}, \widetilde{H}, \widetilde{D}, \widetilde{U}), (\widetilde{V}_{1}, \widetilde{V}_{2}))$ for the EUF-CMA security game of $\SAS_{\OursCK}$ with $\ell=2$.
$\B$ initialize sets $S^{\Cert} \leftarrow \{\}$, $S^{\Sign} \leftarrow \{\}$.
Then, $\B$ sends an instance $(\pp', \pk') \leftarrow (\pp, \pk^{*})$ as an input.

\item {\bf Key registration query:}
For a key registration query on $(\pk, \sk)$, $\B$ checks $\nOMSKVerify(\pk, \sk) \allowbreak = 1$.
If this condition holds, $\B$ updates $S^{\Cert} \leftarrow S^{\Cert} \cup \{(\pk, \sk)\}$, makes a key registration query for the EUF-CMA security game of $\SAS_{\OursCK}$ on $(\pk, \sk)$, and returns $\accept$.
Otherwise, $\B$ returns $\reject$.

\item {\bf Signing query:}
For a signing query on $(L_{n-1} = (\pk_{1}, \dots, \pk_{n-1}), m, \sigma_{n-1})$, if $L_{n-1}  \neq  \epsilon$ (i.e., $n-1 \geq 1$), $\B$ checks that the following conditions:
\begin{enumerate}
\item $\pk^{*} \notin L_{n-1} ;$
\item $\pk_{i} \in Q^{\Cert}$ for $i \in [n-1] ;$
\item $\nOMSSVerify(\nOMSKAgg(L_{n-1}), m, \sigma_{n-1}) = 1$.
\end{enumerate}
If at least one of the above conditions does not hold, $\B$ returns $\bot$.
Then, $\B$ makes a signing query for the EUF-CMA security game of $\SAS_{\OursCK}$ on $(L_{n-1}, (m_{i} = (m, i))_{i \in [n-1]}, m_{n} = (m, n), \sigma_{n-1})$.
Then $\B$ obtains a signature $\sigma_{n} = (A_{n}, B_{n}, C_{n})$. 
$\B$ updates $S^{\Sign} \leftarrow S^{\Sign} \cup \{(m, n)\}$ and returns $\sigma_{n}$ to $\A$.

\item {\bf End of the game:}
After receiving the forgery $(L^{*}_{n^{*}}=(\pk^{*}_{1}, \dots, \pk^{*}_{n^{*}}), m^{*}, \allowbreak \sigma^{*}_{n^{*}} = (A^{*}_{n^{*}}, B^{*}_{n^{*}}, C^{*}_{n^{*}}))$ from $\A$, 
$\B$ checks the following conditions hold.
\begin{enumerate}
\item $n^{*} \leq n_{max} \land \nOMSSVerify(\nOMSKAgg(L^{*}_{n^{*}}), m^{*}, \sigma^{*}_{n^{*}}) = 1;$
\item There exists $i^{*} \in [n^{*}]$ such that $\pk^{*}_{i^{*}} = \pk^{*} \land (m^{*}, i^{*}) \notin S^{\Sign};$
\item $\forall j \in [n^{*}]$ such that $\pk_{j} \neq \pk^{*}$, $\pk_{j} \in Q^{\Cert}.$
\end{enumerate}
Then, $\B$ outputs $(L^{*}_{n^{*}}, (m^{*}_{i} = (m^{*}, i))_{i \in [n^{*}]}, \sigma^{*}_{n^{*}})$ as a forgery.
\end{itemize}

Clearly, $\B$ simulates the EUF-CMA game of $\OMS_{\OursCK}$.
We confirm that if $\A$ outputs a valid forgery $(L^{*}_{n^{*}}=(\pk^{*}_{1}, \dots, \pk^{*}_{n^{*}}), m^{*}, \allowbreak \sigma^{*}_{n^{*}} = (A^{*}_{n^{*}}, B^{*}_{n^{*}}, C^{*}_{n^{*}}))$ for the EUF-CMA game of $\OMS_{\OursCK}$, $\B$ outputs a valid forgery for the EUF-CMA game of $\SAS_{\OursCK}$.
Since a forgery is valid, $(L^{*}_{n^{*}}, (m^{*}_{i} = (m^{*}, i))_{i \in [n^{*}]})$ is not queried for the signing oracle of the EUF-CMA game of $\SAS_{\OursCK}$. 
Since $\nOMSKAgg(L^{*}_{n^{*}}) \allowbreak = (\widetilde{\K}_{1}, \widetilde{\K}_{2}) = (\prod_{i \in [n]}\widetilde{V}_{i, 1} ,  \prod_{i \in [n]}\widetilde{V}^{i}_{i, 2})$ and $\nOMSSVerify(\nOMSKAgg(L^{*}_{n^{*}}), m^{*}, \sigma^{*}_{n^{*}}) \allowbreak = 1$ hold, $e(A^{*}, \widetilde{U} \prod_{i \in [n]}\widetilde{V}_{i, 1}^{m} \prod_{i \in [n]}\widetilde{V}^{i}_{i, 2}) \cdot e(B^{*}, \widetilde{D}) = e(C^{*}, \widetilde{H})$ holds where $\pk^{*}_{i} = (\widetilde{V}_{i, 1}, \widetilde{V}_{i, 2})$ for $i \in [n^{*}]$.
We see that $(L^{*}_{n^{*}}, (m^{*}_{i} = (m^{*}, i))_{i \in [n^{*}]}, \sigma^{*}_{n^{*}})$ is a valid forgery for the EUF-CMA game of $\SAS_{\OursCK}$ with $\ell=2$.
Hence, we conclude Theorem~\ref{Th_OMS_SXDH}.
\qed
\end{proof}

\section*{Acknowledgement}
A part of this work was supported by JSPS KAKENHI JP24H00071, JP23K16841 and JST CREST JPMJCR2113, and JST K Program JPMJKP24U2.

\bibliographystyle{abbrvurl}
\bibliography{OMS_PS}

\newpage
\setcounter{tocdepth}{2}
\tableofcontents

\end{document}